\newtheorem{theorem}{Theorem}[section]
\newtheorem{corollary}[theorem]{Corollary}
\newtheorem{proposition}[theorem]{Proposition}
\newtheorem{lemma}[theorem]{Lemma}
\newcommand{\tuple}[1]{\langle #1 \rangle}
\newcommand{\amap}{f}
\newcommand{\aset}{X}
\newcommand{\asetbis}{Y}
\newcommand{\aelem}{x}
\newcommand{\length}[1]{|#1|}
\newcommand{\aalphabet}{\Sigma}
\newcommand{\aletter}{a}
\newcommand{\aletterbis}{b}
\newcommand{\aletterter}{c}
\newcommand{\adataword}{\sigma}
\newcommand{\aword}{w}
\newcommand{\emptyword}{\varepsilon}
\newcommand{\aformula}{\phi}
\newcommand{\aformulabis}{\psi}
\newcommand{\aformulater}{\chi}
\newcommand{\nextt}{\mathtt{X}}
\newcommand{\dnext}{\bar{\mathtt{X}}}
\newcommand{\until}{\mathtt{U}}
\newcommand{\release}{\mathtt{R}}
\newcommand{\sometimes}{\mathtt{F}}
\newcommand{\always}{\mathtt{G}}
\newcommand{\nuparrow}{\not\,\uparrow}
\newcommand{\eexists}[2]{\exists #1 (#2)}
\newcommand{\fforall}[2]{\forall #1 (#2)}
\newcommand{\aregaut}{\mathcal{A}}
\newcommand{\aclass}{D}
\newcommand{\confs}{F}
\newcommand{\acaut}{\mathcal{C}}
\newcommand{\acval}{v}
\newcommand{\amachine}{\mathcal{M}}
\newcommand{\locs}{Q}
\newcommand{\aloc}{q}
\newcommand{\locsbis}{R}
\newcommand{\alocbis}{r}
\newcommand{\atransf}{\varphi}
\newcommand{\egdef}{\stackrel{\mbox{\tiny def}}{=}}
\newcommand{\equivdef}{\stackrel{\mbox{\tiny def}}{\Leftrightarrow}}
\newcommand{\ooverline}[1]{\overline{\overline{#1}}}
\title
{Safety Alternating Automata on Data Words}
\author
{RANKO LAZI\'C \\
 Department of Computer Science, University of Warwick, UK}
\begin{abstract}
A data word is a sequence of pairs of
a letter from a finite alphabet and
an element from an infinite set,
where the latter can only be compared for equality.
Safety one-way alternating automata with one register
on infinite data words are considered,
their nonemptiness is shown \textsc{ExpSpace}-complete,
and their inclusion decidable but not primitive recursive.
The same complexity bounds are obtained
for satisfiability and refinement, respectively,
for the safety fragment of linear temporal logic with freeze quantification.
Dropping the safety restriction,
adding past temporal operators,
or adding one more register,
each causes undecidability.
\end{abstract}
\begin{document}

\begin{bottomstuff}
This paper is a revised and extended version of \cite{Lazic06}.
\newline
This research was supported
by grants from the EPSRC (GR/S52759/01) and the Intel Corporation,
and by ENS Cachan.
\end{bottomstuff}

\maketitle

\section{Introduction}

\subsubsection*{Context}

Logics and automata for words and trees over finite alphabets
are relatively well-understood.  Motivated partly by
the need for formal verification and synthesis of infinite-state systems,
and the search for automated reasoning techniques for XML,
there is an active and broad research programme on
logics and automata for words and trees which have richer structure.

Segoufin's survey \cite{Segoufin06} is a summary of
the substantial progress made on reasoning about data words and data trees.
A data word is a word over a finite alphabet,
with an equivalence relation on word positions.
Implicitly, every word position is labelled by an element (``datum'')
from an infinite set (``data domain''),
but since the infinite set is equipped only with the equality predicate,
it suffices to know which word positions are labelled by equal data,
and that is what the equivalence relation represents.
Similarly, a data tree is a tree (countable, unranked and ordered)
whose every node is labelled by a letter from a finite alphabet,
with an equivalence relation on the set of its nodes.

It has been nontrivial to find satisfactory
specification formalisms even for data words.
First-order logic was considered in \cite{Bojanczyketal06a,David04},
and related automata were studied further in \cite{Bjorklund&Schwentick07}.
The logic has variables which range over word positions
($\{0, \ldots, l - 1\}$ or $\mathbb{N}$),
a unary predicate for each letter from the finite alphabet,
and a binary predicate $x \sim y$ for
the equivalence relation that represents equality of data labels.
FO$^2(\sim, <, +1)$ denotes such a logic with two variables
and binary predicates $x + 1 = y$ and $x < y$.
Over finite and over infinite data words,
satisfiability for FO$^2(\sim, <, +1)$ was proved decidable
and at least as hard as reachability for Petri nets \cite{Bojanczyketal06a}.
The latter problem is \textsc{ExpSpace}-hard \cite{Lipton76},
but its elementarity is still an open question.
Elementary complexity of satisfiability can be obtained at the price of
substantially reducing the navigational power: over finite data words,
\textsc{NExpTime}-completeness for FO$^2(\sim, <)$
was established in \cite{David04} and
$3$\textsc{NExpTime}-membership for FO$^2(\sim, +1)$
follows from \cite{Bojanczyketal06b}.
In the other direction, if FO$^2(\sim, <, +1)$ is extended by one more variable,
$+1$ becomes expressible using $<$, but satisfiability was shown undecidable
already for FO$^3(\sim, +1)$ \cite{Bojanczyketal06a}.

An alternative approach to reasoning about data words
is based on automata with registers \cite{Kaminski&Francez94}.
A register is used for storing a datum for later equality comparisons
(i.e.\ an equivalence class for later membership testing).
Nonemptiness of one-way nondeterministic register automata
over finite data words has relatively low complexity:
\textsc{NP}-complete \cite{Sakamoto&Ikeda00} or
\textsc{PSpace}-complete \cite{Demri&Lazic09},
depending on technical details of their definition.
Unfortunately, such automata fail to provide a satisfactory notion
of regular language of finite data words,
as they are not closed under complement \cite{Kaminski&Francez94}
and their nonuniversality is undecidable \cite{Neven&Schwentick&Vianu04}.
To overcome those limitations,
one-way alternating automata with $1$ register (for short, 1ARA$_1$)
were proposed in \cite{Demri&Lazic09}:
they are closed under Boolean operations,
their nonemptiness over finite data words is decidable,
and future-time fragments of temporal logics such as
LTL or the modal $\mu$-calculus extended by $1$ register
are easily translatable to such automata.
However, nonemptiness for 1ARA$_1$ turned out to be
not primitive recursive over finite data words,
and undecidable (more precisely, $\Pi^0_1$-hard) over infinite ones
with the weak acceptance mechanism \cite{Muller&Saoudi&Schupp86}
and thus also with B\"uchi or co-B\"uchi acceptance.

\subsubsection*{Contribution}

We consider one-way alternating automata with $1$ register
with the safety acceptance mechanism over infinite data words
(i.e.\ data $\omega$-words).
The languages of such automata are safety properties \cite{Alpern&Schneider87}:
every rejected data $\omega$-word has a finite prefix such that
every other data $\omega$-word which extends it is also rejected.
(Over finite data words, safety is not a restriction.)

The main result is that
nonemptiness of safety 1ARA$_1$ is in \textsc{ExpSpace}.
We say that a sentence of LTL is safety iff each occurrence of
the `until' operator is under an odd number of negations.
In particular, each `eventually' (resp., `always') must be
under an odd (resp., even) number of negations.
By showing that the safety fragment of future-time LTL with $1$ register
is translatable in logarithmic space to safety 1ARA$_1$,
and that satisfiability for the fragment is \textsc{ExpSpace}-hard,
we conclude \textsc{ExpSpace}-completeness of both problems.

The \textsc{ExpSpace} upper bound is surprising
since even decidability is fragile:
by \cite[Theorem~5.2]{Demri&Lazic09},
satisfiability for future-time LTL with $1$ register
on data $\omega$-words is $\Pi^0_1$-hard,
and from the proof of \cite[Theorem~5.4]{Demri&Lazic09},
the same is true for the safety fragment
if past temporal operators or one more register are added
(cf.\ related undecidability results in
\cite{Neven&Schwentick&Vianu04,David04}).
Moreover, nonemptiness of safety forward (i.e.\ downward and rightward)
alternating automata with $1$ register on data trees was shown
decidable but not elementary \cite{Jurdzinski&Lazic07}.
Another setting where decidability \cite{Ouaknine&Worrell06b}
was obtained by restricting to safety sentences is
that of metric temporal logic on timed $\omega$-words,
but the complexity is again not elementary \cite{Bouyeretal08}.

The proof of \textsc{ExpSpace}-membership is in two stages.
The first consists of translating a given safety 1ARA$_1$ $\aregaut$
to a nondeterministic automaton with faulty counters $\acaut_\aregaut$
which is on $\omega$-words over the alphabet of $\aregaut$ and
which is nonempty iff $\aregaut$ is.
The counters of $\acaut_\aregaut$ are faulty in the sense that
they are subject to incrementing errors,
i.e.\ they can spontaneously increase at any time.
Although a nonemptiness-preserving translation
from 1ARA$_1$ with weak acceptance to counter automata with incrementing errors
was given in \cite{Demri&Lazic09}, applying it to safety 1ARA$_1$
produces automata with the B\"uchi acceptance mechanism,
where the latter ensures that certain loops
cannot repeat infinitely due to incrementing errors.
To obtain safety automata, we enrich the instruction set by
\emph{nondeterministic transfers}.
When applied to a counter $c$ and a set of counters $C$,
such an instruction transfers the value of $c$ to the counters in $C$,
nondeterministically splitting it.
Thus we obtain $\acaut_\aregaut$ whose nonemptiness amounts to
existence of an infinite computation from the initial state.
However, a further observation on the resulting automata is required:
the counters of such an automaton are nonempty subsets of a certain set
(essentially, the set of states of the given safety 1ARA$_1$),
and it suffices to use nondeterministic transfers
which are simultaneous for all counters and
which have a certain distributivity property in terms of
the partial-order structure of the set of all counters.

The second stage of the proof is then an inductive counting argument
which shows that $\acaut_\aregaut$ is nonempty iff
it has a computation from the initial state
of length doubly exponential in the size of $\aregaut$.
Some of the techniques are also used in the proof that
termination of channel machines with occurrence testing and insertion errors
is primitive recursive \cite{Bouyeretal08}.
Although counters are simpler resources than channels,
the class of machines considered do not have instructions
which correspond to the nondeterministic transfers,
and the sets of channels and messages
(which are counterparts to the sets of counters)
have no special structure.

We also show that language inclusion between two safety 1ARA$_1$ is decidable,
and hence that refinement (i.e., validity of implication) between
two sentences of safety future-time LTL with $1$ register is also decidable.
Since the safety fragment is closed under conjunctions and disjunctions,
it follows that satisfiability is decidable for
Boolean combinations of safety sentences.
The latter is thus a competing logic to
FO$^2(\sim, <, +1)$ on data $\omega$-words.
They are incomparable in expressiveness:
there exist properties involving the past
(e.g.\ `every $\aletterbis$ is preceded by an $\aletter$ with the same datum')
which are expressible in FO$^2(\sim, <, +1)$
but not by a Boolean combination of safety sentences
(not even in future-time LTL with $1$ register),
and the reverse is true of some constraints
involving more than $2$ word positions
(e.g.\ `whenever $\aletter$ is followed by $\aletterbis$ with the same datum,
$\aletterter$ does not occur in between').
However, as pointed out above, it is not known whether
satisfiability for FO$^2(\sim, <, +1)$ is elementary,
whereas we establish that already satisfiability for
negations of safety sentences is not primitive recursive,
and hence also universality for safety 1ARA$_1$.

\section{Preliminaries}

In this section, we define
safety one-way alternating automata and
safety future-time linear temporal logic
with $1$ register on data $\omega$-words,
as well as the class of counter automata that will be used in
the proof of \textsc{ExpSpace}-membership in Section~\ref{s:upper}.
We also show some of their basic properties,
in particular a logarithmic-space translation
from the linear temporal logic to the alternating automata.

\subsection{Data Words}

A \emph{data $\omega$-word} $\adataword$ over a finite alphabet $\aalphabet$ is
an $\omega$-word $\mathrm{str}(\adataword)$ over $\aalphabet$ together with
an equivalence relation $\sim^\adataword$ on $\mathbb{N} = \{0, 1, \ldots\}$.
We write $\mathbb{N} / \sim^\adataword$ for
the set of all classes of $\sim^\adataword$.
For $i \in \mathbb{N}$, we write
$\adataword(i)$ for the letter at position $i$, and
$[i]_{\sim^\adataword}$ for the class that contains $i$.
When $\adataword$ is understood, we may write
simply $\sim$ instead of $\sim^\adataword$.
We shall sometimes refer to classes of $\sim$ as `data'.

In some places, we shall also need the concept of a finite data word.
For $i > 0$, the $i$-prefix of a data $\omega$-word $\adataword$ is
the finite data word whose letters are $\adataword(0) \cdots \adataword(i - 1)$
and whose equivalence relation is
$\sim^\adataword$ restricted to $\{0, \ldots, i - 1\}$.

\subsection{Register Automata}

The definition of safety one-way alternating $1$-register automata below
is based on the more general one of
weak two-way alternating register automata in \cite{Demri&Lazic09}.
A configuration of such an automaton
at a position $i$ of a data $\omega$-word $\adataword$
will consist of one of finitely many automaton states
and a register value $\aclass \,\in\, \mathbb{N} / \sim$.
From it, depending on the state, the letter $\adataword(i)$,
and whether $\aclass = [i]_\sim$ (denoted $\uparrow$)
or $\aclass \neq [i]_\sim$ (denoted $\nuparrow$),
the automaton chooses a pair $\locs', \locs'_\downarrow$ of sets of states.
The resulting set of configurations at the next word position is
$\{\tuple{\aloc', \aclass} \,:\, \aloc' \in \locs'\} \,\cup\,
 \{\tuple{\aloc', [i]_\sim} \,:\, \aloc' \in \locs'_\downarrow\}$,
i.e.\ the states in $\locs'$ are associated with the old register value,
and the states in $\locs'_\downarrow$ with the class of position $i$.
Following \cite{Brzozowski&Leiss80},
what choices of pairs of sets of states are possible
will be specified in each case by a positive Boolean formula.
That formalisation, in contrast to listing all possible such choices,
will enable a logarithmic-space translation from
safety future-time LTL with $1$ register.

An infinite run of the automaton will consist, for each $j \in \mathbb{N}$,
of a set $\confs_j$ of all configurations at position $j$.
For each $j$, $\confs_{j + 1}$ will be
the union of some sets of configurations
chosen as above for each configuration in $\confs_j$.
Hence, a configuration will be rejecting when
its set of possible choices is empty,
and it will be accepting when it can choose
$\locs' = \locs'_\downarrow = \emptyset$.
The definition of infinite runs will ensure that
they cannot contain rejecting configurations,
so the safety acceptance mechanism will amount to
each infinite run being considered accepting.

Formally, for a finite set $\locs$,
let $\downarrow \locs = \{\downarrow \aloc \,:\, \aloc \in \locs\}$,
and let $\mathcal{B}^+_\downarrow(\locs)$ denote the set of all
positive Boolean formulae over $\locs \,\cup\, \downarrow \locs$,
where we assume that $\locs$ and $\downarrow \locs$ are disjoint:
\[\atransf \:::=\:
  \aloc \,\mid\, {\downarrow} \aloc \,\mid\,
  \top \,\mid\, \bot \,\mid\,
  \atransf \wedge \atransf \,\mid\, \atransf \vee \atransf\]

A \emph{safety one-way alternating automaton with $1$ register}
(shortly, \emph{safety 1ARA$_1$}) $\aregaut$ is a tuple
$\tuple{\aalphabet, \locs, \aloc_I, \delta}$ such that:
\begin{itemize}
\item
$\aalphabet$ is a finite alphabet;
\item
$\locs$ is a finite set of states, and
$\aloc_I \in \locs$ is the initial state;
\item
$\delta:
 (\locs \times \aalphabet \times \{\uparrow, \nuparrow\}) \rightarrow
 \mathcal{B}^+_\downarrow(\locs)$
is a transition function.
\end{itemize}

Satisfaction of a positive Boolean formula over
$\locs \,\cup\, \downarrow \locs$ by a pair of sets
$\locs', \locs'_\downarrow \subseteq \locs$
is defined by structural recursion:
\[\begin{array}{rcl@{\hspace{2em}}rcl}
\locs', \locs'_\downarrow \,\models\, \aloc
& \equivdef &
\aloc \in \locs'
&
\locs', \locs'_\downarrow \,\models\, \top
& & \\
\locs', \locs'_\downarrow \,\models\, {\downarrow} \aloc
& \equivdef &
\aloc \in \locs'_\downarrow
&
\locs', \locs'_\downarrow \,\not\models\, \bot
& &
\end{array}\]
\[\begin{array}{rcl}
\locs', \locs'_\downarrow \,\models\, \atransf \wedge \atransf'
& \equivdef &
\locs', \locs'_\downarrow \,\models\, \atransf \ \mathrm{and}\
\locs', \locs'_\downarrow \,\models\, \atransf'
\\
\locs', \locs'_\downarrow \,\models\, \atransf \vee \atransf'
& \equivdef &
\locs', \locs'_\downarrow \,\models\, \atransf \ \mathrm{or}\
\locs', \locs'_\downarrow \,\models\, \atransf'
\end{array}\]

A configuration of $\aregaut$ for
a data word $\adataword$ is an element of
$\locs \times (\{j \,:\, 0 \leq j < \length{\adataword}\} / \sim)$.
For a position $0 \leq i < \length{\adataword}$,
and finite sets $\confs$ and $\confs'$ of configurations,
we write $\confs \stackrel{\adataword, i}{\longrightarrow} \confs'$
iff, for each $\tuple{\aloc, \aclass} \in \confs$,
there exist
$\locs^{\tuple{\aloc, \aclass}},
 \locs^{\tuple{\aloc, \aclass}}_\downarrow
 \subseteq \locs$
which satisfy the formula $\delta(\aloc, \adataword(i), \uparrow)$
if $\aclass = [i]_\sim$,
or the formula $\delta(\aloc, \adataword(i), \nuparrow)$
if $\aclass \neq [i]_\sim$,
such that
\[\confs' \:=\:
  \{\tuple{\aloc', \aclass} \::\:
    \tuple{\aloc, \aclass} \in \confs \,\wedge\,
    \aloc' \in \locs^{\tuple{\aloc, \aclass}}\} \,\cup\,
  \{\tuple{\aloc', [i]_\sim} \::\:
    \tuple{\aloc, \aclass} \in \confs \,\wedge\,
    \aloc' \in \locs^{\tuple{\aloc, \aclass}}_\downarrow\}\]

We say that $\aregaut$ accepts
a data $\omega$-word $\adataword$ over $\aalphabet$
iff it has an infinite run
$\confs_0 \stackrel{\adataword, 0}{\longrightarrow}
 \confs_1 \stackrel{\adataword, 1}{\longrightarrow}
 \cdots$
where $\confs_0 = \{\tuple{\aloc_I, [0]_\sim}\}$
consists of the initial configuration.
We write $\mathrm{L}(\aregaut)$ for the language of $\aregaut$,
i.e.\ the set of all data $\omega$-words over $\aalphabet$
that $\aregaut$ accepts.

\begin{example}
\label{ex:RA}
A safety 1ARA$_1$ with alphabet $\{\aletter, \aletterbis, \aletterter\}$
and three states is depicted in Figure~\ref{f:RA}.
It rejects a data $\omega$-word iff
there is an occurrence of $\aletter$,
a subsequent occurrence of $\aletterbis$ with the same datum,
and an occurrence of $\aletterter$ between them.

The automaton is deterministic, except for the universal branching
from state $\aloc$ at letter $\aletter$.
When behaviour does not depend on whether
the class in the register equals the class of the current position,
the two cases are not shown separately.
In particular, we have
$\delta(\aloc, \aletter, \uparrow) =
 \delta(\aloc, \aletter, \nuparrow) =
 \aloc \,\wedge\, {\downarrow} \aloc'$.
The absence of a transition from $\aloc''$
labelled by $\aletterbis$ and $\uparrow$
means that we have rejection in that case, i.e.\
$\delta(\aloc'', \aletterbis, \uparrow) = \bot$.
\end{example}

\begin{narrowfig}{.725\textwidth}
\setlength{\unitlength}{4em}
\begin{picture}(6,2)(.25,1)
\gasset{Nadjust=wh}
\node[Nmarks=i,iangle=180](1)(1,2){$\aloc$}
\node[Nadjustdist=0](11)(2,2){}
\node(2)(3,2){$\aloc'$}
\node(3)(5,2){$\aloc''$}
\drawloop[loopdiam=.75,loopangle=90](1){$\aletterbis$}
\drawloop[loopdiam=.75,loopangle=270](1){$\aletterter$}
\drawedge[AHnb=0](1,11){$\aletter$}
\drawedge[curvedepth=.25](11,1){}
\drawedge(11,2){$\downarrow$}
\drawloop[loopdiam=.75,loopangle=90](2){$\aletter$}
\drawloop[loopdiam=.75,loopangle=270](2){$\aletterbis$}
\drawedge(2,3){$\aletterter$}
\drawloop[loopdiam=.75,loopangle=90](3){$\aletter$}
\drawloop[loopdiam=.75,loopangle=0](3)
{$\begin{array}{c}
  \aletterbis \\
  \nuparrow
  \end{array}$}
\drawloop[loopdiam=.75,loopangle=270](3){$\aletterter$}
\end{picture}
\caption{A register automaton}
\label{f:RA}
\end{narrowfig}

A set $L$ of data $\omega$-words over an alphabet $\aalphabet$
is called \emph{safety} \cite{Alpern&Schneider87} iff
it is closed under limits of finite prefixes,
i.e.\ for each data $\omega$-word $\adataword$,
if for each $i > 0$ there exists $\adataword'_i \in L$
with the $i$-prefixes of $\adataword$ and $\adataword'_i$ equal,
then $\adataword \in L$.%
\footnote{Hence, a set is safety iff it is closed with respect to
the Cantor metric, where the distance between two words is inversely
proportional to the length of their longest common prefix.}

\begin{proposition}
\label{pr:safety.RA}
The language of each safety 1ARA$_1$ is safety.
\end{proposition}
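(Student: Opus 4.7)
The plan is to prove the contrapositive by a König's lemma argument. Assume that for every $i > 0$ there exists $\adataword'_i \in \mathrm{L}(\aregaut)$ whose $i$-prefix coincides with that of $\adataword$; I would construct an accepting infinite run of $\aregaut$ on $\adataword$.

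For each such $i$, fix an accepting infinite run $\confs_0^i \stackrel{\adataword'_i, 0}{\longrightarrow} \confs_1^i \stackrel{\adataword'_i, 1}{\longrightarrow} \cdots$. Since the $i$-prefixes of $\adataword$ and $\adataword'_i$ agree, $\sim^\adataword$ and $\sim^{\adataword'_i}$ induce the same partition of $\{0, \ldots, i - 1\}$, so there is a canonical bijection between those classes of the two relations that meet $\{0, \ldots, i - 1\}$. Under this identification, the initial segment $\confs_0^i, \confs_1^i, \ldots, \confs_i^i$ becomes a valid partial run of $\aregaut$ on $\adataword$ of length $i$, because each transition at position $j < i$ depends only on $\adataword(j) = \adataword'_i(j)$ and on whether each register class equals $[j]_\sim$, both of which the identification preserves.

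The key finiteness observation is that at step $j$ of any partial run of $\aregaut$ on $\adataword$, the register values appearing in configurations lie in the finite set $\{[k]_{\sim^\adataword} : 0 \leq k \leq j\}$, since registers are set either to the initial value $[0]_{\sim^\adataword}$ or to $[k]_{\sim^\adataword}$ by a $\downarrow$ at some earlier position $k$. Consequently the possible configurations at any step form a finite set, and so there are only finitely many length-$j$ partial runs on $\adataword$ that can be extracted from accepting runs on the $\adataword'_i$'s as above.

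I would then form the tree whose level-$j$ nodes are these extractable length-$j$ partial runs on $\adataword$, with edges given by the prefix relation. It is finitely branching and has a nonempty level for every $j$ (take $i = j$ for $j \geq 1$), hence infinitely many nodes, so König's lemma yields an infinite branch; concatenating its configuration sets produces an infinite run of $\aregaut$ on $\adataword$, witnessing $\adataword \in \mathrm{L}(\aregaut)$. The only delicate bookkeeping is the canonical class identification that ensures partial runs extracted from different $\adataword'_i$'s lie in a common tree; beyond that, this is a standard compactness argument that works precisely because the safety acceptance mechanism imposes no liveness requirement on infinite runs.
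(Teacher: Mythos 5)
Your proposal is correct and follows essentially the same route as the paper's proof: translate the runs on the words $\adataword'_i$ into partial runs on $\adataword$ via the canonical identification of classes meeting $\{0,\ldots,i-1\}$, assemble them into a finitely branching tree, and apply K\"onig's Lemma to extract an infinite run. Your explicit justification of finite branching (register values at step $j$ lie among the classes of positions $\leq j$) is a detail the paper leaves implicit, but the argument is the same.
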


\begin{proof}
Suppose that $\aregaut$ is a safety 1ARA$_1$,
and for each $i > 0$ there exists $\adataword'_i \in \mathrm{L}(\aregaut)$
such that the $i$-prefixes of $\adataword$ and $\adataword'_i$ are equal.
For each $i$, let
$\confs'_{i, 0} \stackrel{\adataword'_i, 0}{\longrightarrow}
 \confs'_{i, 1} \stackrel{\adataword'_i, 1}{\longrightarrow}
 \ldots$
be an infinite run of $\aregaut$ with
$\confs'_{i, 0} = \{\tuple{\aloc_I, [0]_{\sim^{\adataword'_i}}}\}$.
For each $0 \leq j \leq i$,
let $\confs^\dag_{i, j}$ be obtained from $\confs'_{i, j}$
by replacing each class $\aclass'$ of $\adataword'_i$
with the class $\aclass$ of $\adataword$ such that
$\aclass' \cap \{0, \ldots, i - 1\} = \aclass \cap \{0, \ldots, i - 1\}$.
Now, consider the tree formed by all the sequences
$\tuple{\confs^\dag_{i, j} \,:\, 0 \leq j \leq i}$ for $i > 0$.
The tree is finitely branching, so by K\"onig's Lemma,
it contains an infinite path $\tuple{\confs_j \,:\, j \in \mathbb{N}}$.
It remains to observe that
$\confs_0 \stackrel{\adataword, 0}{\longrightarrow}
 \confs_1 \stackrel{\adataword, 1}{\longrightarrow}
 \ldots$
and $\confs_0 = \{\tuple{\aloc_I, [0]_{\sim^\adataword}}\}$.
\end{proof}

Given safety 1ARA$_1$ $\aregaut_1$ and $\aregaut_2$ with alphabet $\aalphabet$,
it is easy to construct an automaton which recognises
$\mathrm{L}(\aregaut_1) \cap \mathrm{L}(\aregaut_2)$
(resp., $\mathrm{L}(\aregaut_1) \cup \mathrm{L}(\aregaut_2)$).
It suffices to form a disjoint union of $\aregaut_1$ and $\aregaut_2$,
and add a new initial state $\aloc_I$ such that
$\delta(\aloc_I, \aletter, ?) =
 \delta(\aloc_I^1, \aletter, ?) \wedge
 \delta(\aloc_I^2, \aletter, ?)$
(resp.,
$\delta(\aloc_I, \aletter, ?) =
 \delta(\aloc_I^1, \aletter, ?) \vee
 \delta(\aloc_I^2, \aletter, ?)$)
for each $\aletter \in \aalphabet$ and $? \in \{\uparrow, \nuparrow\}$,
where $\aloc_I^1$ and $\aloc_I^2$ are the initial states of
$\aregaut_1$ and $\aregaut_2$.
We thus obtain:

\begin{proposition}
Safety 1ARA$_1$ are closed under finite intersections and finite unions,
in logarithmic space.
\end{proposition}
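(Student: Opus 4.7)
The plan is to formalize the construction already sketched in the paragraph preceding the statement and verify both its correctness and its logarithmic-space complexity.

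First I would define the product-free construction explicitly. Given safety 1ARA$_1$ $\aregaut_k = \tuple{\aalphabet, \locs^k, \aloc_I^k, \delta^k}$ for $k \in \{1, 2\}$, assume without loss of generality that $\locs^1$ and $\locs^2$ are disjoint, pick a fresh initial state $\aloc_I \notin \locs^1 \cup \locs^2$, and let $\locs = \locs^1 \cup \locs^2 \cup \{\aloc_I\}$. Define $\delta$ to agree with $\delta^k$ on each $\aloc \in \locs^k$, and set $\delta(\aloc_I, \aletter, ?) = \delta^1(\aloc_I^1, \aletter, ?) \wedge \delta^2(\aloc_I^2, \aletter, ?)$ for intersection, or with $\vee$ in place of $\wedge$ for union. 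Since each $\delta^k(\aloc_I^k, \aletter, ?)$ is a positive Boolean formula over $\locs^k \cup \downarrow\locs^k$ and these sets are disjoint, the resulting formula lies in $\mathcal{B}^+_\downarrow(\locs)$.

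For correctness of the intersection construction, I would show that a pair $\locs', \locs'_\downarrow \subseteq \locs$ satisfies $\delta(\aloc_I, \aletter, ?)$ if and only if it decomposes as $\locs' = \locs'^1 \cup \locs'^2$ and $\locs'_\downarrow = \locs'^1_\downarrow \cup \locs'^2_\downarrow$ with $\locs'^k \cup \locs'^k_\downarrow \subseteq \locs^k$ and $\locs'^k, \locs'^k_\downarrow \models \delta^k(\aloc_I^k, \aletter, ?)$ for both $k$; this uses the disjointness of the state sets and a straightforward induction on formula structure. Consequently, an infinite run of $\aregaut$ starting from $\{\tuple{\aloc_I, [0]_\sim}\}$ has at position $1$ a set of configurations which is a disjoint union of two subsets, one per $\aregaut^k$, each of which then evolves only among configurations over $\locs^k$ (since the transition function preserves the disjoint components). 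By projecting these two subsets to their respective sides and prepending the appropriate initial configuration, one recovers infinite runs of $\aregaut^1$ and $\aregaut^2$ on the same $\adataword$; conversely, given accepting runs of both, one concatenates them with the initial step to obtain a run of $\aregaut$. Thus $\mathrm{L}(\aregaut) = \mathrm{L}(\aregaut^1) \cap \mathrm{L}(\aregaut^2)$. The union case is identical except that at the first step the chosen disjunct fixes which $\aregaut^k$ to follow, and the other component starts and remains empty.

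Finally, for the logarithmic-space bound I would observe that the output can be produced by a transducer which, apart from a fixed amount of bookkeeping for tagging the two copies of the state sets and printing the new initial-state transitions, simply copies $\delta^1$ and $\delta^2$. Disjointness of $\locs^1$ and $\locs^2$ can be enforced on the fly by prefixing each state name with a single bit, and the only arithmetic required is over pointers into the input, which is logarithmic. I do not anticipate a real obstacle here; the only care needed is in the first paragraph, where the decomposition argument relies essentially on disjointness of the state sets so that no subformula of $\delta^1(\aloc_I^1,\aletter,?)$ can be made true via states of $\aregaut^2$ or vice versa.
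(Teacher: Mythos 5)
Your proposal is correct and follows essentially the same route as the paper, which simply exhibits the disjoint-union construction with a fresh initial state whose transition formula is the conjunction (resp.\ disjunction) of the two original initial-state formulas and asserts the result; you additionally spell out the decomposition of satisfying pairs and the run-splitting argument that the paper leaves implicit. The only point deserving a little extra care is that an arbitrary accepting run of the combined automaton may use non-minimal satisfying pairs, so the two components need not literally stay disjoint; this is harmless because satisfaction of positive Boolean formulas is monotone and runs restrict to subsets of configurations, but it is worth a sentence if you write this out in full.
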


\subsection{Linear Temporal Logic}

Safety LTL$^\downarrow_1(\nextt, \release)$ will denote
the safety fragment of future-time linear temporal logic with $1$ register,
whose syntax is given below.
Each formula is over a finite alphabet $\aalphabet$,
over which the atomic formulae $\aletter$ range.
By restricting ourselves to formulae in negation normal form,
the safety restriction amounts to the `release' temporal operator
being available instead of its dual `until'.
The formulae may also contain the `next' temporal operator.
A freeze quantification ${\downarrow} \aformula$ binds
each free occurrence of $\uparrow$ in $\aformula$.
Such an occurrence will evaluate to true iff
the word position at the time of the freeze quantification and
the word position when the occurrence of $\uparrow$ is evaluated
are in the same class.
\[
\aformula \:::=\:
\aletter \,\mid\, \top \,\mid\, \bot \,\mid\,
\aformula \wedge \aformula \,\mid\, \aformula \vee \aformula \,\mid\,
\nextt \aformula \,\mid\, \aformula \release \aformula \,\mid\,
{\downarrow} \aformula \,\mid\, {\uparrow} \,\mid\, {\nuparrow}
\]
The `always' temporal operator can be introduced by regarding
$\always \aformula$ as an abbreviation for $\bot \release \aformula$.

For a data $\omega$-word $\adataword$ over a finite alphabet $\aalphabet$,
a position $i \in \mathbb{N}$,
a register value $\aclass \,\in\, \mathbb{N} / \sim$, and
a formula $\aformula$ over $\aalphabet$,
writing $\adataword, i \,\models_\aclass\, \aformula$ will mean that
$\aformula$ is satisfied by $\adataword$ at $i$ with respect to $\aclass$.
The satisfaction relation is defined as follows,
where we omit the Boolean cases.
\begin{eqnarray*}
\adataword, i \,\models_\aclass\, \aletter
& \equivdef &
\adataword(i) = \aletter
\\
\adataword, i \,\models_\aclass\, \nextt \aformula
& \equivdef &
\adataword, i + 1 \,\models_\aclass\, \aformula
\\
\adataword, i \,\models_\aclass\, \aformula \release \aformulabis
& \equivdef &
\mathrm{either}\ \mathrm{for\ all}\ k \geq i,\
\adataword, k \,\models_\aclass\, \aformulabis,\
\mathrm{or}\ \mathrm{for\ some}\ j \geq i, \\ & &
\adataword, j \,\models_\aclass\, \aformula\ \mathrm{and}\
\mathrm{for\ all}\ k \in \{i, \ldots, j\},\
\adataword, k \,\models_\aclass\, \aformulabis
\\
\adataword, i \,\models_\aclass\, {\downarrow} \aformula
& \equivdef &
\adataword, i \,\models_{[i]_\sim}\, \aformula
\\
\adataword, i \,\models_\aclass\, {\uparrow}
& \equivdef &
i \in \aclass
\\
\adataword, i \,\models_\aclass\, {\nuparrow}
& \equivdef &
i \notin \aclass
\end{eqnarray*}
If $\aformula$ is a sentence, i.e.\ contains no free occurrence of $\uparrow$,
we may omit $\aclass$ since it is irrelevant and write
$\adataword, i \,\models\, \aformula$.
Let $\mathrm{L}(\aformula)$ denote the language of $\aformula$,
i.e.\ the set of all data $\omega$-words over $\aalphabet$
such that $\adataword, 0 \,\models\, \aformula$.

\begin{example}
\label{ex:LTL}
Consider the following sentence $\aformula$
over alphabet $\{\aletter, \aletterbis, \aletterter\}$:
\[\always (\aletterbis \vee \aletterter \vee
    {\downarrow} \nextt \always (\aletter \vee \aletterbis \vee
      \nextt \always (\aletter \vee \aletterter \vee {\nuparrow})))\]
We have $\adataword, 0 \,\models\, \aformula$ iff,
for each occurrence of $\aletter$ in $\adataword$
and each later occurrence of $\aletterter$,
there is no later still occurrence of $\aletterbis$
with the same datum as the occurrence of $\aletter$,
i.e.\ iff $\adataword$ is accepted by the automaton in Example~\ref{ex:RA}.
\end{example}

\begin{theorem}
\label{th:LTL2RA}
For each sentence $\aformula$ of
safety LTL$^\downarrow_1(\nextt, \release)$,
a safety 1ARA$_1$ $\aregaut_\aformula$
with the same alphabet and
$\mathrm{L}(\aformula) =
 \mathrm{L}(\aregaut_\aformula)$
is computable in logarithmic space.
\end{theorem}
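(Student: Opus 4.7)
The plan is to define $\aregaut_\aformula$ whose states are the subformulas of $\aformula$, with initial state $\aformula$ itself, and whose transition function mimics the classical inductive translation from LTL to alternating automata, augmented to handle the single register and the freeze operator. Writing $\locs$ for the set of subformulas, I would define $\delta(\aformulabis, \aletterbis, ?)$ by recursion on $\aformulabis$: for an atomic letter $\aletter$, set it to $\top$ if $\aletter = \aletterbis$ and $\bot$ otherwise; for $\top$ and $\bot$, take the constants themselves; for the Boolean connectives, recurse homomorphically; for $\uparrow$ emit $\top$ if $? = \uparrow$ and $\bot$ otherwise (dually for $\nuparrow$); for $\nextt \aformulater$ emit the unmarked state $\aformulater$ so that the register is carried forward; and for $\aformulater \release \aformulater'$ use the fixpoint unfolding $\delta(\aformulater', \aletterbis, ?) \wedge (\delta(\aformulater, \aletterbis, ?) \vee (\aformulater \release \aformulater'))$.

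The freeze case $\downarrow \aformulater$ will be the main obstacle, since the automaton's syntax can only place $\downarrow$-marks on states at the \emph{next} position, whereas semantically the freeze must act at the \emph{current} position. The trick is to compute $\delta(\aformulater, \aletterbis, \uparrow)$---that is, to process the current letter under the pretence that the register has already been overwritten with the current class---and then to rewrite the resulting formula in $\mathcal{B}^+_\downarrow(\locs)$ by replacing every plain state occurrence $q'$ with $\downarrow q'$. This pushes the assumption ``register equals current class'' into every successor configuration. If $\aformulater$ itself contains $\downarrow$-subformulae, their outputs have already been $\downarrow$-marked by recursion, and remarking them is harmless because the current class equals itself; hence the substitution is well-defined and sound.

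Correctness would then be established by a straightforward induction on $\aformulabis$ showing that, for every data $\omega$-word $\adataword$, position $i$ and class $\aclass$, $\adataword, i \models_\aclass \aformulabis$ holds iff $\aregaut_\aformula$ admits an infinite run from $\{\tuple{\aformulabis, \aclass}\}$ at position $i$. The atomic, propositional, $\uparrow/\nuparrow$ and $\nextt$ cases are immediate; the $\downarrow$ case follows from the substitution observation above combined with the induction hypothesis. For $\aformulater \release \aformulater'$ both directions use the fixpoint unfolding, and the delicate point is the semantic clause ``$\aformulater'$ holds at every $k \geq i$'': this is realised by a run that unfolds the release forever, which is accepted precisely because, by Proposition~\ref{pr:safety.RA} and the definition of acceptance, every infinite run of a safety 1ARA$_1$ is accepting.

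Finally, logarithmic-space computability is routine: each state of $\aregaut_\aformula$ is identified by a pointer into the parse tree of $\aformula$, each transition formula is produced by a single traversal of the corresponding subformula emitting output symbols using only a pointer and constant bookkeeping, and the substitution required for the $\downarrow$ case is performed in place as the formula is printed.
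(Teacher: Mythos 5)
Your proposal is correct and follows essentially the same route as the paper: subformulas as states, the standard fixpoint unfolding for $\release$ (with infinite unfolding accepted by the safety acceptance condition), and—crucially—the same trick for ${\downarrow}\aformulabis$ of computing $\delta(\aloc_\aformulabis, \aletterbis, \uparrow)$ and substituting ${\downarrow}\aloc'$ for each unmarked state occurrence $\aloc'$. The only cosmetic difference is that the paper restricts the state set to $\aformula$ itself, the bodies of $\nextt$-subformulas, and the $\release$-subformulas, which changes nothing essential.
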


\begin{proof}
The translation is a straightforward adaptation of
the classical one from LTL to alternating automata
(cf.\ e.g.\ \cite{Vardi96}).

To define $\aregaut_\aformula$ with alphabet $\aalphabet$ of $\aformula$,
let the set of states $\locs$ consist of all $\aloc_{\aformula'}$
such that $\aformula'$ is either $\aformula$,
or $\aformulabis$ for a subformula $\nextt \aformulabis$ of $\aformula$,
or a subformula $\aformulabis \release \aformulater$ of $\aformula$.
Let the initial state be $\aloc_\aformula$.
The transition function is obtained by restricting to $\locs$
the function defined below by structural recursion over the set of all
$\aloc_{\aformula'}$ where $\aformula'$ is a subformula of $\aformula$.
The dual cases are omitted, and $?$ ranges over $\{\uparrow, \nuparrow\}$.
In the formula for ${\downarrow} \aformulabis$,
each occurrence of a state $\aloc'$ without $\downarrow$
is substituted by ${\downarrow} \aloc'$.
\[\begin{array}{rcl@{\hspace{2em}}rcl}
\delta(\aloc_\aletter, \aletter, ?)
& \egdef &
\top
&
\delta(\aloc_{\aformulabis \wedge \aformulater}, \aletter, ?)
& \egdef &
\delta(\aloc_\aformulabis, \aletter, ?) \wedge
\delta(\aloc_\aformulater, \aletter, ?)
\\
\delta(\aloc_\aletter, \aletter', ?)
& \egdef &
\bot, \mathrm{for}\ \aletter' \neq \aletter
&
\delta(\aloc_{\nextt \aformulabis}, \aletter, ?)
& \egdef &
\aloc_\aformulabis
\\
\delta(\aloc_\top, \aletter, ?)
& \egdef &
\top
&
\delta(\aloc_{\aformulabis \release \aformulater}, \aletter, ?)
& \egdef &
\delta(\aloc_\aformulater, \aletter, ?) \wedge
(\delta(\aloc_\aformulabis, \aletter, ?) \vee
 \aloc_{\aformulabis \release \aformulater})
\\
\delta(\aloc_{\uparrow}, \aletter, \uparrow)
& \egdef &
\top
&
\delta(\aloc_{{\downarrow} \aformulabis}, \aletter, ?)
& \egdef &
\delta(\aloc_\aformulabis, \aletter, \uparrow)
[{\downarrow} \aloc' / \aloc' \,:\, \aloc' \in \locs]
\\
\delta(\aloc_{\uparrow}, \aletter, \nuparrow)
& \egdef &
\bot
& & &
\end{array}\]

That $\aregaut_\aformula$ is computable in logarithmic space follows by
observing that, for each subformula $\aformula'$ of $\aformula$,
$\aletter \in \aalphabet$, and $? \in \{\uparrow, \nuparrow\}$,
a single traversal of $\aformula'$ suffices for computing
$\delta(\aloc_{\aformula'}, \aletter, ?)$.

Equality of the languages of $\aformula$ and $\aregaut_\aformula$
is implied by the following claim:
for each subformula $\aformula'$ of $\aformula$,
data $\omega$-word $\adataword$ over $\aalphabet$,
position $i \in \mathbb{N}$, and
register value $\aclass \,\in\, \mathbb{N} / \sim$,
we have $\adataword, i \,\models_\aclass\, \aformula'$
iff, for some $\locs', \locs'_\downarrow \subseteq \locs$ such that
$\locs', \locs'_\downarrow \,\models\,
 \delta(\aloc_{\aformula'}, \adataword(i), \uparrow)$
if $\aclass = [i]_\sim$, or such that
$\locs', \locs'_\downarrow \,\models\,
 \delta(\aloc_{\aformula'}, \adataword(i), \nuparrow)$
if $\aclass \neq [i]_\sim$,
$\aregaut_\aformula$ has an infinite run
from position $i + 1$ of $\adataword$, starting with
\[\{\tuple{\aloc', \aclass} \,:\, \aloc' \in \locs'\} \,\cup\,
  \{\tuple{\aloc', [i]_\sim} \,:\, \aloc' \in \locs'_\downarrow\}\]
(If $\aloc_{\aformula'}$ is a state of $\aregaut_\aformula$,
the latter is equivalent to $\aregaut_\aformula$ having a run
from position $i$ of $\adataword$, starting with
$\{\tuple{\aloc_{\aformula'}, \aclass}\}$.)
The claim is provable by structural induction on $\aformula'$.
We treat explicitly the two interesting cases:
$\aformula' = \aformulabis \release \aformulater$ and
$\aformula' = {\downarrow} \aformulabis$.

Suppose $\adataword, i \,\models_\aclass\, \aformulabis \release \aformulater$.
If $\adataword, j \,\models_\aclass\, \aformulabis$
for some $j \geq i$,
and $\adataword, k \,\models_\aclass\, \aformulater$
for all $k \in \{i, \ldots, j\}$,
then by the inductive hypothesis:
\begin{itemize}
\item[(i)]
for some $\locs', \locs'_\downarrow \subseteq \locs$ such that
$\locs', \locs'_\downarrow \,\models\,
 \delta(\aloc_\aformulabis, \adataword(j), \uparrow)$
if $\aclass = [j]_\sim$, or such that
$\locs', \locs'_\downarrow \,\models\,
 \delta(\aloc_\aformulabis, \adataword(j), \nuparrow)$
if $\aclass \neq [j]_\sim$,
$\aregaut_\aformula$ has an infinite run
$\confs'_{j + 1} \stackrel{\adataword, j + 1}{\longrightarrow}
 \confs'_{j + 2} \stackrel{\adataword, j + 2}{\longrightarrow}
 \cdots$
with
\[\confs'_{j + 1} =
  \{\tuple{\aloc', \aclass} \,:\, \aloc' \in \locs'\} \,\cup\,
  \{\tuple{\aloc', [j]_\sim} \,:\, \aloc' \in \locs'_\downarrow\}\]
\item[(ii)]
for all $k \in \{i, \ldots, j\}$,
for some $\locs^k, \locs^k_\downarrow \subseteq \locs$ such that
$\locs^k, \locs^k_\downarrow \,\models\,
 \delta(\aloc_\aformulater, \adataword(k), \uparrow)$
if $\aclass = [k]_\sim$, or such that
$\locs^k, \locs^k_\downarrow \,\models\,
 \delta(\aloc_\aformulater, \adataword(k), \nuparrow)$
if $\aclass \neq [k]_\sim$,
$\aregaut_\aformula$ has an infinite run
$\confs^k_{k + 1} \stackrel{\adataword, k + 1}{\longrightarrow}
 \confs^k_{k + 2} \stackrel{\adataword, k + 2}{\longrightarrow}
 \cdots$
with
\[\confs^k_{k + 1} =
  \{\tuple{\aloc', \aclass} \,:\, \aloc' \in \locs^k\} \,\cup\,
  \{\tuple{\aloc', [k]_\sim} \,:\, \aloc' \in \locs^k_\downarrow\}\]
\end{itemize}
Letting
$\confs^\dag_l =
 \{\tuple{\aloc_{\aformulabis \release \aformulater}, \aclass}\} \,\cup\,
 \bigcup_{k \in \{i, \ldots, l - 1\}} \confs^k_l$
for each $l \in \{i, \ldots, j\}$, and
$\confs^\dag_l =
 \bigcup_{k \in \{i, \ldots, j\}} \confs^k_l \,\cup\,
 \confs'_l$
for each $l \geq j + 1$,
we have by (i) and (ii) that
$\confs^\dag_i \stackrel{\adataword, i}{\longrightarrow}
 \confs^\dag_{i + 1} \stackrel{\adataword, i + 1}{\longrightarrow}
 \cdots$
and
$\confs^\dag_i =
 \{\tuple{\aloc_{\aformulabis \release \aformulater}, \aclass}\}$,
as required.
If $\adataword, k \,\models_\aclass\, \aformulater$ for all $k \geq i$,
the argument is simpler.

For the converse, suppose $\aregaut_\aformula$ has an infinite run
$\confs^\dag_i \stackrel{\adataword, i}{\longrightarrow}
 \confs^\dag_{i + 1} \stackrel{\adataword, i + 1}{\longrightarrow}
 \cdots$
with
$\confs^\dag_i =
 \{\tuple{\aloc_{\aformulabis \release \aformulater}, \aclass}\}$.
If there exists $j \geq i$ with
$\tuple{\aloc_{\aformulabis \release \aformulater}, \aclass} \notin
 \confs^\dag_{j + 1}$,
consider the minimum such $j$.
Since
$\delta(\aloc_{\aformulabis \release \aformulater}, \aletter, ?) =
 \delta(\aloc_\aformulater, \aletter, ?) \wedge
 (\delta(\aloc_\aformulabis, \aletter, ?) \vee
  \aloc_{\aformulabis \release \aformulater})$,
we obtain:
\begin{itemize}
\item[(iii)]
for some $\locs', \locs'_\downarrow \subseteq \locs$ such that
$\locs', \locs'_\downarrow \,\models\,
 \delta(\aloc_\aformulabis, \adataword(j), \uparrow)$
if $\aclass = [j]_\sim$, or such that
$\locs', \locs'_\downarrow \,\models\,
 \delta(\aloc_\aformulabis, \adataword(j), \nuparrow)$
if $\aclass \neq [j]_\sim$,
we have 
\[\{\tuple{\aloc', \aclass} \,:\, \aloc' \in \locs'\} \,\cup\,
  \{\tuple{\aloc', [j]_\sim} \,:\, \aloc' \in \locs'_\downarrow\}
  \,\subseteq\, \confs^\dag_{j + 1}\]
\item[(iv)]
for all $k \in \{i, \ldots, j\}$,
for some $\locs^k, \locs^k_\downarrow \subseteq \locs$ such that
$\locs^k, \locs^k_\downarrow \,\models\,
 \delta(\aloc_\aformulater, \adataword(k), \uparrow)$
if $\aclass = [k]_\sim$, or such that
$\locs^k, \locs^k_\downarrow \,\models\,
 \delta(\aloc_\aformulater, \adataword(k), \nuparrow)$
if $\aclass \neq [k]_\sim$,
we have
\[\{\tuple{\aloc', \aclass} \,:\, \aloc' \in \locs^k\} \,\cup\,
  \{\tuple{\aloc', [k]_\sim} \,:\, \aloc' \in \locs^k_\downarrow\}
  \,\subseteq\, \confs^\dag_{k + 1}\]
\end{itemize}
By considering subruns starting with
the sets of configurations in (iii) and (iv),
and the inductive hypothesis, it follows that
$\adataword, j \,\models_\aclass\, \aformulabis$, and
$\adataword, k \,\models_\aclass\, \aformulater$
for all $k \in \{i, \ldots, j\}$, so
$\adataword, i \,\models_\aclass\, \aformulabis \release \aformulater$
as required.
If
$\tuple{\aloc_{\aformulabis \release \aformulater}, \aclass} \in
 \confs^\dag_{j + 1}$
for all $j \geq i$,
the argument is again simpler.

For case $\aformula' = {\downarrow} \aformulabis$,
we have $\adataword, i \,\models_\aclass\, {\downarrow} \aformulabis$
iff $\adataword, i \,\models_{[i]_\sim}\, \aformulabis$.
By the inductive hypothesis, that is iff:
\begin{itemize}
\item[(v)]
for some $\locs^\dag, \locs^\dag_\downarrow \subseteq \locs$ such that
$\locs^\dag, \locs^\dag_\downarrow \,\models\,
 \delta(\aloc_\aformulabis, \adataword(i), \uparrow)$,
$\aregaut_\aformula$ has an infinite run
from position $i + 1$ of $\adataword$, starting with
$\{\tuple{\aloc', [i]_\sim} \,:\,
   \aloc' \in \locs^\dag \cup \locs^\dag_\downarrow\}$.
\end{itemize}
On the other hand, $\aregaut_\aformula$ having an infinite run
from position $i + 1$ of $\adataword$, starting with
\[\{\tuple{\aloc', \aclass} \,:\, \aloc' \in \locs'\} \,\cup\,
  \{\tuple{\aloc', [i]_\sim} \,:\, \aloc' \in \locs'_\downarrow\}\]
for some $\locs', \locs'_\downarrow \subseteq \locs$ such that
$\locs', \locs'_\downarrow \,\models\,
 \delta(\aloc_{{\downarrow} \aformulabis}, \adataword(i), \uparrow)$
if $\aclass = [i]_\sim$, or such that
$\locs', \locs'_\downarrow \,\models\,
 \delta(\aloc_{{\downarrow} \aformulabis}, \adataword(i), \nuparrow)$
if $\aclass \neq [i]_\sim$,
is equivalent to:
\begin{itemize}
\item[(vi)]
for some $\locs', \locs'_\downarrow \subseteq \locs$ such that
$\locs', \locs'_\downarrow \,\models\,
 \delta(\aloc_\aformulabis, \adataword(i), \uparrow)
 [{\downarrow} \aloc' / \aloc' \,:\, \aloc' \in \locs]$,
$\aregaut_\aformula$ has an infinite run
from position $i + 1$ of $\adataword$, starting with
\[\{\tuple{\aloc', \aclass} \,:\, \aloc' \in \locs'\} \,\cup\,
  \{\tuple{\aloc', [i]_\sim} \,:\, \aloc' \in \locs'_\downarrow\}\]
\end{itemize}
It remains to observe that
$\locs', \locs'_\downarrow \,\models\,
 \delta(\aloc_\aformulabis, \adataword(i), \uparrow)
 [{\downarrow} \aloc' / \aloc' \,:\, \aloc' \in \locs]$ iff
$\locs'_\downarrow = \locs^\dag \cup \locs^\dag_\downarrow$ for some
$\locs^\dag, \locs^\dag_\downarrow \,\models\,
 \delta(\aloc_\aformulabis, \adataword(i), \uparrow)$,
so (v) and (vi) are equivalent.
\end{proof}

\subsection{Counter Automata}

We introduce below a class of nondeterministic automata on $\omega$-words
which have $\emptyword$ transitions and $\mathbb{N}$-valued counters.
The set of counters of such an automaton will have structure:
there will be a finite set called the basis of the automaton,
and each counter will be a nonempty subset of the basis.
In the course of a transition, the automaton will be able
either to increment a counter, or to decrement a counter if nonzero,
or to perform a simultaneous nondeterministic transfer
with respect to a mapping $\amap$ from counters to sets of counters.
The latter transfers the value of each counter $c$
to the counters in $\amap(c)$, nondeterministically splitting it.
However, only mappings which satisfy a distributivity constraint
in terms of the structure of the set of counters may be used.

The observation that simultaneous nondeterministic transfers
arising from translating safety 1ARA$_1$ are distributive
(cf.\ the proof of Theorem~\ref{th:RA2IPCANT}),
and that distributivity enables nonemptiness of the counter automata
to be decided in space exponential in basis size
(cf.\ the proof of Theorem~\ref{th:IPCANT}),
are key components of the paper.

We shall only consider automata with no cycles of $\emptyword$ transitions,
and they will recognise safety languages,
so every infinite run will accept some $\omega$-word.

The automata will be faulty in the sense that
their counters may erroneously increase at any time.

Formally, for a finite set $\aset$ and
$C \,\subseteq\, \mathcal{P}(\aset) \setminus \{\emptyset\}$,
let $L(C)$ be the set of all instructions:
\begin{itemize}
\item
$\tuple{\mathtt{inc}, c}$ and $\tuple{\mathtt{dec}, c}$ for $c \in C$;
\item
$\tuple{\mathtt{transf}, \amap}$ for mappings
$\amap: C \rightarrow \mathcal{P}(C)$ which are \emph{distributive} as follows:
whenever $c \in C$, $c \subseteq \bigcup_{i = 1}^k c_i$,
and $c'_i \in \amap(c_i)$ for each $i = 1, \ldots, k$,
there exists $c' \in \amap(c)$ such that
$c' \subseteq \bigcup_{i = 1}^k c'_i$.
\end{itemize}

A \emph{safety powerset counter automaton with
nondeterministic transfers and incrementing errors}
(shortly, \emph{safety IPCANT}) $\acaut$ is a tuple
$\tuple{\aalphabet, \locs, \aloc_I, \aset, C, \delta}$ such that:
\begin{itemize}
\item
$\aalphabet$ is a finite alphabet;
\item
$\locs$ is a finite set of states, and
$\aloc_I$ is the initial state;
\item
$\aset$ is a finite set called the \emph{basis}, and
$C \,\subseteq\, \mathcal{P}(\aset) \setminus \{\emptyset\}$
is the set of counters;
\item
$\delta \subseteq
 \locs \times (\aalphabet \uplus \{\emptyword\}) \times L(C) \times \locs$
is a transition relation which does not contain
a cycle of $\emptyword$ transitions.
\end{itemize}

A configuration of $\acaut$ is a pair $\tuple{\aloc, \acval}$,
where $\aloc \in \locs$ and $\acval$ is a counter valuation,
i.e.\ $\acval: C \rightarrow \mathbb{N}$.
We say that $\tuple{\aloc, \acval}$ has an error-free transition
labelled by $\aword \in \aalphabet \uplus \{\emptyword\}$
and performing $l \in L(C)$ to $\tuple{\aloc', \acval'}$,
and we write
$\tuple{\aloc, \acval} \stackrel{\aword, l}{\longrightarrow}_\surd
 \tuple{\aloc', \acval'}$,
iff $\tuple{\aloc, \aword, l, \aloc'} \in \delta$ and
$\acval'$ can be obtained from $\acval$ by $l$.
The latter is defined as follows:
\begin{itemize}
\item
instructions $\tuple{\mathtt{inc}, c}$ and $\tuple{\mathtt{dec}, c}$
have the standard interpretations,
where $\tuple{\mathtt{dec}, c}$ is firable iff $\acval(c) > 0$;
\item
$\acval'$ can be obtained from $\acval$
by $\tuple{\mathtt{transf}, \amap}$ iff
there exist $K^c_{c'} \geq 0$ for each $c \in C$ and $c' \in \amap(c)$,
such that:
\[\mathrm{for\ each}\ c \in C,\
  \acval(c) = \textstyle{\sum}_{c' \in \amap(c)} K^c_{c'}
  \hspace{2em}
  \mathrm{for\ each}\ c' \in C,\
  \acval'(c') = \textstyle{\sum}_{\amap(c) \ni c'} K^c_{c'}\]
in particular, $\tuple{\mathtt{transf}, \amap}$ is firable iff
$\acval(c) = 0$ whenever $\amap(c) = \emptyset$.
\end{itemize}

For counter valuations $\acval$ and $\acval_\surd$,
we write $\acval \leq \acval_\surd$ iff,
for all $c$, $\acval(c) \leq \acval_\surd(c)$.
To allow transitions of $\acaut$ to contain incrementing errors, we define
$\tuple{\aloc, \acval} \stackrel{\aword, l}{\longrightarrow}
 \tuple{\aloc', \acval'}$
to mean that there exist $\acval_\surd$ and $\acval'_\surd$ with
$\acval \leq \acval_\surd$,
$\tuple{\aloc, \acval_\surd} \stackrel{\aword, l}{\longrightarrow}_\surd
 \tuple{\aloc', \acval'_\surd}$ and
$\acval'_\surd \leq \acval'$.

We say that $\acaut$ accepts an $\omega$-word $\aword$ over $\aalphabet$
iff $\acaut$ has a run
$\tuple{\aloc_0, \acval_0} \stackrel{\aword_0, l_0}{\longrightarrow}
 \tuple{\aloc_1, \acval_1} \stackrel{\aword_1, l_1}{\longrightarrow}
 \cdots$
where $\tuple{\aloc_0, \acval_0}$ is the initial configuration
$\tuple{\aloc_I, \mathbf{0}}$
and $\aword = \aword_0 \aword_1 \ldots$.

\begin{example}
\label{ex:IPCANT}
Given $\asetbis \subseteq \aset$,
let $\amap_\asetbis(c) = \emptyset$ if $c \cap \asetbis \neq \emptyset$,
and $\amap_\asetbis(c) = \{c\}$ otherwise.
Observe that $\amap_\asetbis$ is distributive.
The instruction $\tuple{\mathtt{transf}, \amap_\asetbis}$ is firable
iff each counter which intersects $\asetbis$ is zero,
and it does not change the value of any counter.
Hence, we may write $\tuple{\mathtt{ifz}^\cap, \asetbis}$
instead of $\tuple{\mathtt{transf}, \amap_\asetbis}$.

Suppose $C = \{\{\aelem\} \,:\, \aelem \in \aset\}$,
i.e.\ the set of counters has no structure.
The instruction $\tuple{\mathtt{ifz}^\cap, \asetbis}$ is firable
iff each counter $\{\aelem\}$ for $\aelem \in \asetbis$ is zero.
Observe that every $\amap: C \rightarrow \mathcal{P}(C)$ is distributive.
For instance, given $c \in C$ and nonempty $C' \subseteq C$,
let $\amap_{c, C'}(c) = C'$
and $\amap_{c, C'}(c') = \{c'\}$ for $c' \neq c$.
The instruction $\tuple{\mathtt{transf}, \amap_{c, C'}}$
nondeterministically distributes the value of $c$ to the counters in $C'$.
\end{example}

For $\acaut$ as above, let us say that a transition
$\tuple{\aloc, \acval} \stackrel{\aword, l}{\longrightarrow}
 \tuple{\aloc', \acval'}$
is \emph{lazy} iff either
$\tuple{\aloc, \acval} \stackrel{\aword, l}{\longrightarrow}_\surd
 \tuple{\aloc', \acval'}$,
or $l$ is of the form $\tuple{\mathtt{dec}, c}$,
$\acval(c) = 0$ and $\acval' = \acval$.
Thus, in lazy transitions, only incrementing errors which
enable decrements of counters with value $0$ may occur.
The following straightforward proposition shows that
restricting to lazy transitions does not affect
the languages of safety IPCANTs.

\begin{proposition}
\label{pr:lazy}
Whenever
$\tuple{\aloc, \acval} \stackrel{\aword, l}{\longrightarrow}
 \tuple{\aloc', \acval'}$
is a transition of a safety IPCANT $\acaut$
and $\acval_\dag \leq \acval$,
there exists a lazy transition
$\tuple{\aloc, \acval_\dag} \stackrel{\aword, l}{\longrightarrow}
 \tuple{\aloc', \acval'_\dag}$
of $\acaut$ such that $\acval'_\dag \leq \acval'$.
\end{proposition}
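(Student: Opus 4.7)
The plan is to unfold the definition of a (faulty) transition: from $\tuple{\aloc, \acval} \stackrel{\aword, l}{\longrightarrow} \tuple{\aloc', \acval'}$ we have $\acval_\surd, \acval'_\surd$ with $\acval \leq \acval_\surd$, an error-free transition $\tuple{\aloc, \acval_\surd} \stackrel{\aword, l}{\longrightarrow}_\surd \tuple{\aloc', \acval'_\surd}$, and $\acval'_\surd \leq \acval'$. I would then do case analysis on the instruction $l$, using this ``envelope'' of error-free behaviour to exhibit an appropriate lazy transition from $\acval_\dag$.

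For $l = \tuple{\mathtt{inc}, c}$, I fire the increment error-free from $\acval_\dag$, producing $\acval'_\dag = \acval_\dag[c \mapsto \acval_\dag(c) + 1]$. Since $\acval_\dag \leq \acval \leq \acval_\surd$ and $\acval'_\surd \leq \acval'$, componentwise monotonicity gives $\acval'_\dag \leq \acval'$. For $l = \tuple{\mathtt{dec}, c}$, I split on whether $\acval_\dag(c) > 0$: if so, I fire the decrement error-free and use the same monotonicity argument; if $\acval_\dag(c) = 0$, I take the stutter branch of the lazy definition with $\acval'_\dag = \acval_\dag$, and check $\acval_\dag \leq \acval'$ componentwise (the $c$-component is trivial, other components go through $\acval(c') \leq \acval_\surd(c') = \acval'_\surd(c') \leq \acval'(c')$).

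The transfer case $l = \tuple{\mathtt{transf}, \amap}$ is the main technical step. Firability at $\acval_\dag$ is inherited, since if $\amap(c) = \emptyset$ then firability at $\acval_\surd$ gives $\acval_\surd(c) = 0$, and hence $\acval_\dag(c) = 0$. To construct the outcome, let $K^c_{c'}$ be the splitting coefficients witnessing the error-free transfer from $\acval_\surd$ to $\acval'_\surd$, so $\acval_\surd(c) = \sum_{c' \in \amap(c)} K^c_{c'}$. Because $\acval_\dag(c) \leq \acval_\surd(c)$, I can choose $K^{c,\dag}_{c'} \leq K^c_{c'}$ with $\sum_{c' \in \amap(c)} K^{c,\dag}_{c'} = \acval_\dag(c)$ (for instance, greedily subtract the excess $\acval_\surd(c) - \acval_\dag(c)$ from the $K^c_{c'}$). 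Setting $\acval'_\dag(c') = \sum_{\amap(c) \ni c'} K^{c,\dag}_{c'}$ yields an error-free (hence lazy) transfer from $\acval_\dag$, and termwise $\acval'_\dag(c') \leq \acval'_\surd(c') \leq \acval'(c')$.

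The only step that is not entirely routine is the transfer case, and even there the obstacle is mild: I only need to observe that any incrementing errors absorbed into $\acval_\surd$ and $\acval'_\surd$ can be discarded by shrinking the splitting coefficients, which preserves the transfer equations. Distributivity of $\amap$ is not actually needed here, so the proof is uniform in that aspect.
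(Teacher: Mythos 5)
Your proof is correct and is exactly the routine case analysis the paper has in mind: the paper states this proposition as ``straightforward'' and omits the argument, and your unfolding of the error envelope $\acval_\dag \leq \acval \leq \acval_\surd$, with the stutter branch for zero decrements and the shrinking of the splitting coefficients $K^c_{c'}$ in the transfer case, is the intended verification. Your closing observation that distributivity of $\amap$ plays no role here is also accurate --- it is needed only later, in Lemma~\ref{l:sqsse}.
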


A set $L$ of $\omega$-words over an alphabet $\aalphabet$
is called \emph{safety} \cite{Alpern&Schneider87} iff
it is closed under limits of finite prefixes,
i.e.\ for each $\omega$-word $\aword$,
if for each $i > 0$ there exists $\aword'_i \in L$
such that the $i$-prefixes of $\aword$ and $\aword'_i$ are equal,
then $\aword \in L$.
For each safety IPCANT, the tree of all its lazy runs is finitely branching,
so by simplifying the argument in the proof of Proposition~\ref{pr:safety.RA},
and by Proposition~\ref{pr:lazy}, we obtain:

\begin{proposition}
The language of each safety IPCANT is safety.
\end{proposition}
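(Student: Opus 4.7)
The plan is to adapt the argument of Proposition~\ref{pr:safety.RA}, substituting counter configurations for register configurations and using Proposition~\ref{pr:lazy} to restrict attention to lazy runs. Suppose $\acaut$ is a safety IPCANT and $\aword$ is an $\omega$-word such that, for each $i > 0$, there is $\aword'_i \in \mathrm{L}(\acaut)$ whose $i$-prefix agrees with that of $\aword$; the goal is to produce an accepting run of $\acaut$ on $\aword$.

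First I would fix, for each $i$, an accepting run $\rho_i$ of $\acaut$ on $\aword'_i$ and, by iterating Proposition~\ref{pr:lazy} step by step starting from the initial configuration (where $\mathbf{0} \leq \mathbf{0}$ holds trivially), extract a lazy accepting run $\rho^\flat_i$ on the same $\aword'_i$. The main technical observation will be that the tree $T$ whose depth-$n$ nodes are the lazy finite prefixes of the form
$\tuple{\aloc_0,\acval_0} \stackrel{u_0,l_0}{\longrightarrow} \cdots \stackrel{u_{n-1},l_{n-1}}{\longrightarrow} \tuple{\aloc_n,\acval_n}$
starting from $\tuple{\aloc_I,\mathbf{0}}$ whose $\emptyword$-free concatenation of input labels $u_0, \ldots, u_{n-1}$ is a prefix of $\aword$ is finitely branching. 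To verify this I would note that, in any lazy run starting from $\mathbf{0}$, the total counter value never exceeds the number of steps performed, because increments add $1$, error-free decrements and transfers do not raise the total, and the firing alternative for a zero decrement leaves the valuation unchanged. Hence each counter value is bounded by $n$ at depth $n$, so every transfer instruction admits only finitely many firing outcomes, and together with the finiteness of $\delta$ this yields finite branching.

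Next I would show that $T$ is infinite: for each $n$, choosing any $i \geq n$, the length-$n$ prefix of $\rho^\flat_i$ reads an $\emptyword$-free word of length at most $n \leq i$, which is a prefix of $\aword'_i$ and hence of $\aword$, so it is a depth-$n$ node of $T$. K\"onig's Lemma then yields an infinite path $\pi$ through $T$, i.e.\ an infinite lazy run of $\acaut$ from $\tuple{\aloc_I,\mathbf{0}}$. Since $\delta$ has no $\emptyword$-cycle and $\locs$ is finite, the number of consecutive $\emptyword$ transitions along $\pi$ is uniformly bounded, so $\pi$ reads an entire $\omega$-word; by the prefix compatibility built into $T$, that $\omega$-word coincides with $\aword$, which gives $\aword \in \mathrm{L}(\acaut)$.

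The only delicate point is establishing finite branching in the presence of nondeterministic transfers, which could in principle produce infinitely many successor valuations; the bound on counter values in lazy runs is precisely what rules this out. Everything else is a routine streamlining of the K\"onig-Lemma extraction used in Proposition~\ref{pr:safety.RA}.
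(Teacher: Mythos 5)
Your proposal is correct and follows essentially the same route as the paper, which likewise restricts to lazy runs via Proposition~\ref{pr:lazy}, observes that the tree of lazy runs is finitely branching, and reuses the K\"onig's-Lemma argument of Proposition~\ref{pr:safety.RA}. Your explicit justification of finite branching (total counter value in a lazy run from $\mathbf{0}$ is bounded by the number of steps, so nondeterministic transfers have only finitely many outcomes) is exactly the detail the paper leaves implicit.
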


\section{Upper Bound}
\label{s:upper}

This section contains a two-stage proof that
nonemptiness of safety 1ARA$_1$ is in \textsc{ExpSpace}.
The first theorem below shows that each such automaton $\aregaut$
is translatable to a safety IPCANT $\acaut_\aregaut$
of at most exponential size, but whose basis size is polynomially
(in fact, linearly) bounded.  Nonemptiness is preserved,
since $\acaut_\aregaut$ accepts exactly the string projections
of data $\omega$-words in the language of $\aregaut$.
By the second theorem, nonemptiness of $\acaut_\aregaut$
is decidable in space exponential in its basis size and polynomial
(in fact, polylogarithmic) in its alphabet size and number of states,
so space exponential in the size of $\aregaut$ suffices overall.

We start with a piece of notation and a lemma about IPCANT.
Suppose $C$ is a set of counters over a basis $\aset$.
For counter valuations $\acval_\surd$ and $\acval$,
let us write $\acval_\surd \sqsubseteq \acval$ iff
there exists $\acval_\dag \leq \acval$ which can be obtained from $\acval_\surd$
by performing $\tuple{\mathtt{transf}, c \mapsto \{d \,:\, c \subseteq d\}}$.
The lemma states that $\sqsubseteq$ is downwards compatible with
every simultaneous nondeterministic transfer.

\begin{lemma}
\label{l:sqsse}
Whenever $\acval_\surd \sqsubseteq \acval$
and $\acval'$ is obtainable from $\acval$ by some
$\tuple{\mathtt{transf}, \amap}$ with distributive $\amap$,
there exists $\acval'_\surd$ obtainable from $\acval_\surd$ by
$\tuple{\mathtt{transf}, \amap}$ and such that
$\acval'_\surd \sqsubseteq \acval'$.
\end{lemma}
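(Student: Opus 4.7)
The plan is to unpack the two given witnesses as explicit non-negative integer splits, glue them using the distributivity hypothesis on $\amap$ (which I only need in its trivial $k=1$ form), and then read off the two required witnesses from the combined data.

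First, from $\acval_\surd \sqsubseteq \acval$ I extract non-negative integers $\{K^c_d : c, d \in C,\, c \subseteq d\}$ with $\sum_{d \supseteq c} K^c_d = \acval_\surd(c)$ and $\acval_\dag(d) := \sum_{c \subseteq d} K^c_d \leq \acval(d)$. From the transfer $\acval \to \acval'$ by $\amap$ I get non-negative $\{L^d_e : d \in C,\, e \in \amap(d)\}$ with $\sum_{e} L^d_e = \acval(d)$ and $\sum_{d} L^d_e = \acval'(e)$. Since $\acval_\dag(d) \leq \acval(d)$, I thin to $\tilde L^d_e \leq L^d_e$ with $\sum_{e} \tilde L^d_e = \acval_\dag(d)$. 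Both $\sum_{c \subseteq d} K^c_d$ and $\sum_{e \in \amap(d)} \tilde L^d_e$ equal $\acval_\dag(d)$, so a routine transportation-style decomposition produces integers $\{M^{c,d}_e : c \subseteq d,\, e \in \amap(d)\}$ with row marginal $\sum_{e} M^{c,d}_e = K^c_d$ and column marginal $\sum_{c \subseteq d} M^{c,d}_e = \tilde L^d_e$.

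Next, I invoke distributivity in its one-term form: for each $c \subseteq d$ and each $e \in \amap(d)$ there exists some $\gamma(c, d, e) \in \amap(c)$ with $\gamma(c, d, e) \subseteq e$. Fixing one such choice function $\gamma$, I dispatch each $M^{c,d}_e$-unit along $\gamma$: define $P^c_{c'} = \sum_{d \supseteq c,\, e \in \amap(d),\, \gamma(c,d,e) = c'} M^{c,d}_e$ and $\acval'_\surd(c') = \sum_{c :\, c' \in \amap(c)} P^c_{c'}$. Since $\sum_{c' \in \amap(c)} P^c_{c'} = \sum_{d \supseteq c} K^c_d = \acval_\surd(c)$, the $P^c_{c'}$'s witness that $\acval'_\surd$ is obtainable from $\acval_\surd$ by $\tuple{\mathtt{transf}, \amap}$.

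The same $M^{c,d}_e$-data simultaneously furnish a $\sqsubseteq$-witness between $\acval'_\surd$ and $\acval'$: set $N^{c'}_e = \sum_{c \subseteq d,\, \gamma(c,d,e) = c'} M^{c,d}_e$. By construction $\gamma(c,d,e) \subseteq e$, so only pairs with $c' \subseteq e$ contribute; summing over $e$ recovers $\acval'_\surd(c')$, and summing over $c' \subseteq e$ yields $\sum_d \tilde L^d_e \leq \acval'(e)$. The sole conceptual step is the production of the diagonal fill $\gamma$ via distributivity; the main obstacle, such as it is, is just keeping the triple-indexed marginals straight, which is best handled by treating $M^{c,d}_e$ as a joint distribution on triples $(c, d, e)$ and reading off each of the marginals as needed.
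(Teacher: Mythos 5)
Your proof is correct and takes essentially the same route as the paper's: the paper does the identical bookkeeping at the level of individual counter ``units,'' encoding your $K$, $L$ and $M^{c,d}_e$ as an injection $\iota$, a bijection $\beta$ and their composite, and both arguments turn on the same $k=1$ instance of distributivity to choose, for each unit routed from $c$ through $d$ to $e$, some $c'\in\amap(c)$ with $c'\subseteq e$. There are no gaps; the only difference is notational (integer marginal decompositions versus maps between multisets of tokens).
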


\begin{proof}
We use the following shorthand:
$\widetilde{\acval} =
 \bigcup_{c \in C} \{\tuple{c, 1}, \ldots, \tuple{c, \acval(c)}\}$.

The assumptions are equivalent to existence of:
an injective $\iota: \widetilde{\acval_\surd} \rightarrow \widetilde{\acval}$
such that $c \subseteq d$ whenever $\iota\tuple{c, i} = \tuple{d, j}$,
and a bijective $\beta: \widetilde{\acval} \rightarrow \widetilde{\acval'}$
such that $\amap(d) \ni d'$ whenever $\beta\tuple{d, j} = \tuple{d', j'}$.

For each $\tuple{c, i} \in \widetilde{\acval_\surd}$,
we have $c \subseteq d$ where $\iota\tuple{c, i} = \tuple{d, j}$,
and $\amap(d) \ni d'$ where $\beta\tuple{d, j} = \tuple{d', j'}$,
so by distributivity of $\amap$,
there exists $c' \in \amap(c)$ such that $c' \subseteq d'$.
Hence, there exist a counter valuation $\acval'_\surd$ and a bijective
$\beta_\surd: \widetilde{\acval_\surd} \rightarrow \widetilde{\acval'_\surd}$
such that $c' \in \amap(c)$ and $c' \subseteq d'$ whenever
$\beta_\surd\tuple{c, i} = \tuple{c', i'}$ and
$(\beta \circ \iota)\tuple{c, i} = \tuple{d', j'}$.
It remains to observe that $\beta \circ \iota \circ \beta_\surd^{-1}$
is an injection from $\widetilde{\acval'_\surd}$ to $\widetilde{\acval'}$.
\end{proof}

\begin{theorem}
\label{th:RA2IPCANT}
Given a safety 1ARA$_1$ $\aregaut$,
a safety IPCANT $\acaut_\aregaut$ is computable in polynomial space,
such that $\acaut_\aregaut$ and $\aregaut$ have the same alphabet,
the basis size of $\acaut_\aregaut$
is linear in the number of states of $\aregaut$, and
$\mathrm{L}(\acaut_\aregaut) =
 \{\mathrm{str}(\adataword) \,:\, \adataword \in \mathrm{L}(\aregaut)\}$.
\end{theorem}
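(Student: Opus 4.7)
Given $\aregaut = \tuple{\aalphabet, \locs, \aloc_I, \delta}$, I would take the basis of $\acaut_\aregaut$ to be $\aset = \locs$ (possibly with a small adjustment, such as a distinguished sink symbol, to accommodate classes whose state content vanishes) and represent each reachable ARA$_1$-configuration $\confs$ by: (a) the state set $S = \{\aloc : \tuple{\aloc, [i]_\sim} \in \confs\}$ of the \emph{current} equivalence class, stored in the control state; (b) for each nonempty $c \subseteq \locs$, a counter $c$ whose value is the number of non-current classes $\aclass \neq [i]_\sim$ with $\{\aloc : \tuple{\aloc, \aclass} \in \confs\} = c$. The control state additionally encodes a constant-size ``phase'' indicator for interleaving letter-reading and $\emptyword$-bookkeeping steps. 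The basis size is then $O(|\locs|)$ as required, and although the numbers of counters, control states and transitions are exponential in $|\locs|$, each component is polynomial-space enumerable.

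A macro-step $\confs \stackrel{\adataword, i}{\longrightarrow} \confs'$ would be simulated by reading $\aletter = \mathrm{str}(\adataword)(i)$ followed by a short chain of $\emptyword$-transitions. The IPCANT first guesses the new current-class set $T \subseteq \locs$ and, for each $\aloc \in S$, a satisfying assignment $(\locs^\aloc, \locs^\aloc_\downarrow) \models \delta(\aloc, \aletter, \uparrow)$ with $\bigcup_{\aloc \in S}(\locs^\aloc \cup \locs^\aloc_\downarrow) \subseteq T$. It then fires a single simultaneous nondeterministic transfer $\tuple{\mathtt{transf}, \amap_{\aletter, T}}$ on the counters, with
\[\amap_{\aletter, T}(c) \;=\; \Bigl\{\textstyle\bigcup_{\aloc \in c}\locs^\aloc \::\: \text{each } (\locs^\aloc,\locs^\aloc_\downarrow) \models \delta(\aloc,\aletter,\nuparrow),\ \textstyle\bigcup_{\aloc \in c}\locs^\aloc_\downarrow \subseteq T\Bigr\} \setminus \{\emptyset\}.\]
Counters $c$ for which $\amap_{\aletter, T}(c) = \emptyset$ are forced to zero via the $\mathtt{ifz}^\cap$ instruction of Example~\ref{ex:IPCANT}. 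Finally, the IPCANT commits in one more $\emptyword$-step to what $[i{+}1]_\sim$ will be: keep $[i]_\sim$ (new control state $T$, counters untouched); swap with some existing non-current class (decrement its counter, set the control state to its new state content, and increment counter $T$); or detach to a fresh class (set the control state to $\emptyset$ and increment counter $T$).

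Distributivity of $\amap_{\aletter, T}$: given $c \subseteq \bigcup_i c_i$ and $c'_i \in \amap_{\aletter, T}(c_i)$ witnessed by per-element choices $(\locs^{\aloc, i}, \locs^{\aloc, i}_\downarrow)$ for $\aloc \in c_i$, I pick for each $\aloc \in c$ some $i(\aloc)$ with $\aloc \in c_{i(\aloc)}$ and reuse that choice; each retained pair still satisfies $\delta(\aloc, \aletter, \nuparrow)$ with downward part inside $T$, so $c' = \bigcup_{\aloc \in c}\locs^{\aloc, i(\aloc)}$ lies in $\amap_{\aletter, T}(c)$ (using the sink symbol to keep things nonempty when needed) and is clearly contained in $\bigcup_i c'_i$. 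For correctness, the forward direction — a run of $\aregaut$ on $\adataword$ yields an error-free run of $\acaut_\aregaut$ on $\mathrm{str}(\adataword)$ — is obtained by always guessing $T$ to be the actual new current-class set and copying the ARA$_1$'s satisfying assignments. Conversely, a lazy run of $\acaut_\aregaut$ on an $\omega$-word $\aword$ is converted into an accepting $\aregaut$-run on a data word with string projection $\aword$: Lemma~\ref{l:sqsse} lets me shepherd each incrementing error into the introduction of a fresh data value in $\adataword$ whose state-set history matches the IPCANT's guess for the phantom class.

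The main obstacle is that the downward contributions $\locs^{\aloc, c}_\downarrow$ of arbitrarily many non-current classes must all be funnelled into the single set $T$ held in the control state, while the other-class update must be expressible as one distributive simultaneous transfer. Guessing $T$ \emph{before} the transfer is fired, and folding the constraint $\bigcup_\aloc \locs^\aloc_\downarrow \subseteq T$ directly into the definition of $\amap_{\aletter, T}(c)$, decouples the per-class splits from the current-class aggregation and is exactly what makes the distributivity argument go through. The incrementing errors of $\acaut_\aregaut$ then cover the gap between $T$ and the exact union of contributions when reconstructing a data word for the converse direction.
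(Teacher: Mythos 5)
Your construction is essentially the paper's: the same abstraction (the current class's state content held in the control state, and for each nonempty $\locsbis \subseteq \locs$ a counter recording how many non-current classes have state content exactly $\locsbis$), the same use of a single simultaneous nondeterministic transfer to advance all non-current classes at once, the same sink element added to every counter to keep unions nonempty so that distributivity survives, and the same appeal to Lemma~\ref{l:sqsse} to argue that incrementing errors only introduce spurious threads. The one genuine divergence is how the downward contributions of the non-current classes are funnelled into the class of the current position. The paper computes that union \emph{exactly}: its step~(3) transfer sends each $\overline{\locsbis}$ to doubled counters $\ooverline{\locsbis', \locsbis'_\downarrow}$ that retain the downward component, step~(5) then tests for each state $\aloc$ whether some positive doubled counter has $\aloc$ in its downward part, and step~(6) transfers the doubled counters back. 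You instead guess the target set $T$ \emph{before} firing the transfer and fold the constraint $\bigcup_{\aloc \in c} \locs^\aloc_\downarrow \subseteq T$ into $\amap_{\aletter, T}$, discarding the downward information on the spot. This buys a smaller basis (no $\ooverline{\cdot\,,\cdot}$ family) and eliminates the positivity-testing machinery, at the price of $T$ being only an over-approximation of the true union; that is sound, since a too-large $T$ burdens the run with extra threads exactly as an incrementing error does and is absorbed by the same monotonicity argument, while the exact $T$ is always among the guesses, giving completeness. Two details to tidy: the $\mathtt{ifz}^\cap$ step is redundant (firability of $\tuple{\mathtt{transf}, \amap_{\aletter, T}}$ already forces counters with empty image to be zero), and in any case the set of such counters need not have the intersection form required by Example~\ref{ex:IPCANT}; and the converse direction is cleaner if organised as in the paper --- construct an error-free execution related to the erroneous one by $\sqsubseteq$ and then invoke the correspondence with runs of $\aregaut$ --- rather than informally shepherding each error into a phantom class, though the lemma you invoke is the right tool for either phrasing.
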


\begin{proof}
The proof is an adaptation of the proof of \cite[Theorem~4.4]{Demri&Lazic09},
where it was shown how to translate in polynomial space
weak 1ARA$_1$ to B\"uchi nondeterministic counter automata
with $\emptyword$ transitions and incrementing errors,
and whose instructions are increments, decrements and
zero tests of individual counters.
We show below essentially that, since $\aregaut$ is safety,
zero tests of individual counters, cycles of $\emptyword$ transitions
and the B\"uchi acceptance condition can be eliminated
using nondeterministic transfers with a suitable basis and set of counters,
resulting in a safety IPCANT.

Let $\aregaut = \tuple{\aalphabet, \locs, \aloc_I, \delta}$.
We first introduce an abstraction which maps
a finite set $\confs$ of configurations of $\aregaut$
at a position $i$ of a data word $\adataword$ over $\aalphabet$
to a triple $\tuple{\aletter, \locs_\uparrow, \sharp}$ such that:
$\aletter = \adataword(i)$,
$\locs_\uparrow$ is the set of all states
that occur in $\confs$ paired with $[i]_\sim$,
and for each nonempty $\locsbis \subseteq \locs$,
$\sharp(\locsbis)$ is the number of data $\aclass \neq [i]_\sim$
for which $\locsbis$ is the set of all states
that occur in $\confs$ paired with $\aclass$.
Thus, the abstraction records only the letter at position $i$,
and equalities among the datum at position $i$ and
data in configurations in $\confs$.
We then observe that nonemptiness of $\aregaut$ is equivalent to
existence of an infinite sequence of abstract transitions
which starts from a triple of the form
$\tuple{\aletter, \{\aloc_I\}, \mathbf{0}}$.
In other words, searching for
a data $\omega$-word $\adataword$ over $\aalphabet$
and an infinite run of $\aregaut$ on $\adataword$
can be performed one position at a time,
while keeping in memory only the information recorded by the abstraction.

Formally, we define $H_\aregaut$ to be the set of all triples
$\tuple{\aletter, \locs_\uparrow, \sharp}$ for which
$\aletter \in \aalphabet$, $\locs_\uparrow \subseteq \locs$, and
$\sharp: \mathcal{P}(\locs) \setminus \{\emptyset\} \,\rightarrow\, \mathbb{N}$.
For a data word $\adataword$ over $\aalphabet$,
a position $0 \leq i < \length{\adataword}$,
and finite set $\confs$ of configurations, let
$h(\adataword, i, \confs) =
 \tuple{\adataword(i), \locs_\uparrow^{\confs, [i]_\sim},
        \sharp^{\confs, [i]_\sim}}$,
where, for each nonempty $\locsbis \subseteq \locs$:
\[\locs_\uparrow^{\confs, \aclass} =
  \{\aloc \,:\, \tuple{\aloc, \aclass} \in \confs\}
  \hspace{2em}
  \sharp^{\confs, \aclass}(\locsbis) =
  \length{\{\aclass' \neq \aclass \,:\,
            \locs_\uparrow^{\confs, \aclass'} = \locsbis\}}\]

To obtain a successor of a member of $H_\aregaut$,
for each configuration that it represents,
sets of states which satisfy the appropriate
positive Boolean formula in $\aregaut$ are chosen,
and then two cases are distinguished:
either the datum at the next position occurs in the next set of configurations,
or not.
Thus, we write
$\tuple{\aletter, \locs_\uparrow, \sharp} \rightarrow
 \tuple{\aletter', \locs'_\uparrow, \sharp'}$
iff, for each $\aloc \in \locs_\uparrow$,
there exist
$\locs^\aloc, \locs^\aloc_\downarrow \,\models\,
 \delta(\aloc, \aletter, \uparrow)$,
and for each nonempty $\locsbis \subseteq \locs$,
$j \in \{1, \ldots, \sharp(\locsbis)\}$ and $\aloc \in \locsbis$,
there exist
$\locs^{\locsbis, j, \aloc}, \locs^{\locsbis, j, \aloc}_\downarrow \,\models\,
 \delta(\aloc, \aletter, \nuparrow)$,
such that:
\begin{itemize}
\item
either $\sharp' = \sharp^\dag[\locs'_\uparrow \mapsto
                              \sharp^\dag(\locs'_\uparrow) - 1]$,
\item
or $\locs'_\uparrow = \emptyset$ and $\sharp' = \sharp^\dag$,
\end{itemize}
where, for each nonempty $\locsbis' \subseteq \locs$,
$\sharp^\dag(\locsbis')$ is defined as
\[\begin{array}{c}
  \length{\{\tuple{\locsbis, j} \,:\,
            \textstyle{\bigcup}_{\aloc \in \locsbis} \locs^{\locsbis, j, \aloc}
            = \locsbis'\}} \; + \\
  \left\{\begin{array}{ll}
  1, & \mathrm{if\ }
  \bigcup_{\aloc \in \locs_\uparrow} \locs^\aloc \,\cup\,
  \bigcup_{\aloc \in \locs_\uparrow} \locs^\aloc_\downarrow \,\cup\,
  \bigcup_{\locsbis, j}
    \bigcup_{\aloc \in \locsbis} \locs^{\locsbis, j, \aloc}_\downarrow
  = \locsbis' \\
  0, & \mathrm{otherwise}
  \end{array}\right.
\end{array}\]

We claim the following correspondence between
infinite sequences of transitions in $H_\aregaut$ from initial triples and
infinite runs of $\aregaut$ from initial configurations:
\begin{describe}{(*)}
\item[(*)]
$\tuple{\aletter_0, \locs^0_\uparrow, \sharp_0} \rightarrow
 \tuple{\aletter_1, \locs^1_\uparrow, \sharp_1} \rightarrow
 \cdots$
is an infinite sequence of transitions in $H_\aregaut$ such that
$\locs^0_\uparrow = \{\aloc_I\}$ and $\sharp_0 = \mathbf{0}$
iff $\aregaut$ has an infinite run
$\confs_0 \stackrel{\adataword, 0}{\longrightarrow}
 \confs_1 \stackrel{\adataword, 1}{\longrightarrow}
 \cdots$
on a data $\omega$-word $\adataword$ over $\aalphabet$ such that
$\confs_0 = \{\tuple{\aloc_I, [0]_\sim}\}$ and
$\tuple{\aletter_i, \locs^i_\uparrow, \sharp_i} =
 h(\adataword, i, \confs_i)$
for each $i \in \mathbb{N}$.
\end{describe}
One direction is straightforward, since
$h(\adataword, 0, \{\tuple{\aloc_I, [0]_\sim}\}) =
 \tuple{\adataword(0), \{\aloc_I\}, \mathbf{0}}$, and
$\confs \stackrel{\adataword, i}{\longrightarrow} \confs'$ implies
$h(\adataword, i, \confs) \rightarrow h(\adataword, i + 1, \confs')$.
For the other direction, suppose
$\tuple{\aletter_0, \locs^0_\uparrow, \sharp_0} \rightarrow
 \tuple{\aletter_1, \locs^1_\uparrow, \sharp_1} \rightarrow
 \cdots$
is an infinite sequence of transitions in $H_\aregaut$,
$\locs^0_\uparrow = \{\aloc_I\}$ and $\sharp_0 = \mathbf{0}$.
For each $i \in \mathbb{N}$,
let $\adataword_i$ be a data word over $\aalphabet$ of length $i + 1$
and $\confs_i$ be a set of configurations for $\adataword_i$ with
$\tuple{\aletter_i, \locs^i_\uparrow, \sharp_i} =
 h(\adataword_i, i, \confs_i)$,
chosen as follows:
\begin{itemize}
\item
We take $\mathrm{str}(\adataword_0) = \aletter_0$,
$\sim^{\adataword_0} = \{\tuple{0, 0}\}$, and
$\confs_0 = \{\tuple{\aloc_I, \{0\}}\}$.
\item
Given $\adataword_i$ and $\confs_i$,
we choose $\adataword_{i + 1}$ and $\confs_{i + 1}$ for which
$\adataword_i$ is the $(i + 1)$-prefix of $\adataword_{i + 1}$,
$\tuple{\aletter_{i + 1}, \locs^{i + 1}_\uparrow, \sharp_{i + 1}} =
 h(\adataword_{i + 1}, i + 1, \confs_{i + 1})$, and
$\confs_i \stackrel{\adataword_i, i}{\longrightarrow} \confs_{i + 1}$.
\end{itemize}
Now, let $\adataword^\dag$ be the limit of the $\adataword_i$,
i.e.\ such that for each $i \in \mathbb{N}$,
$\adataword_i$ is the $(i + 1)$-prefix of $\adataword^\dag$.
For each $i \in \mathbb{N}$, let $\confs^\dag_i$ be the unique
set of configurations for $\adataword^\dag$ that satisfies
\[\confs_i =
  \{\tuple{\aloc, \aclass \cap \{0, \ldots, i\}} \,:\,
    \tuple{\aloc, \aclass} \in \confs^\dag_i\}\]
Observe that $\length{\confs^\dag_i} = \length{\confs_i}$,
so $\confs^\dag_i$ is finite.
Moreover,
$h(\adataword^\dag, i, \confs^\dag_i) =
 h(\adataword_i, i, \confs_i)$, so
$h(\adataword^\dag, i, \confs^\dag_i) =
 \tuple{\aletter_i, \locs^i_\uparrow, \sharp_i}$.
Finally, since
$\confs_i \stackrel{\adataword_i, i}{\longrightarrow}
 \confs_{i + 1}$, we have
$\confs^\dag_i \stackrel{\adataword^\dag, i}{\longrightarrow}
 \confs^\dag_{i + 1}$.

The nondeterministic procedure below guesses an infinite sequence
$\tuple{\aletter_0, \locs^0_\uparrow, \sharp_0} \rightarrow
 \tuple{\aletter_1, \locs^1_\uparrow, \sharp_1} \rightarrow
 \cdots$
of transitions in $H_\aregaut$ such that
$\locs^0_\uparrow = \{\aloc_I\}$ and $\sharp_0 = \mathbf{0}$
in the following manner:
whenever the main loop has been performed $i$ times
and execution is at the end of step~(2),
$\aletter$, $\locs_\uparrow$ and the counters $c$
store $\aletter_i$, $\locs^i_\uparrow$ and $\sharp_i$ (respectively),
and all the counters $d$ have value $0$.
In the notation of the definition above of transitions in $H_\aregaut$,
each $d(\locsbis', \locsbis'_\downarrow)$ is used to count
the number of pairs $\tuple{\locsbis, j}$ such that
$\bigcup_{\aloc \in \locsbis} \locs^{\locsbis, j, \aloc} =
 \locsbis'$ and
$\bigcup_{\aloc \in \locsbis} \locs^{\locsbis, j, \aloc}_\downarrow =
 \locsbis'_\downarrow$.
If one or more choices in steps (3) or (4) are not possible,
the procedure blocks.
\begin{itemize}
\item[(0)]
Set $c(\locsbis) := 0$
for each nonempty $\locsbis \subseteq \locs$,
and $d(\locsbis, \locsbis_\downarrow) := 0$
for each $\locsbis, \locsbis_\downarrow \subseteq \locs$.
\item[(1)]
Set $\locs_\uparrow := \{\aloc_I\}$.
\item[(2)]
Choose $\aletter \in \aalphabet$.
\item[(3)]
While $c(\locsbis) > 0$
for some nonempty $\locsbis \subseteq \locs$, do:
\begin{itemize}
\item
decrement $c(\locsbis)$;
\item
for each $\aloc \in \locsbis$, choose
$\locs^\aloc, \locs^\aloc_\downarrow \,\models\,
 \delta(\aloc, \aletter, \nuparrow)$;
\item
increment
$d(\bigcup_{\aloc \in \locsbis} \locs^\aloc,
   \bigcup_{\aloc \in \locsbis} \locs^\aloc_\downarrow)$.
\end{itemize}
\item[(4)]
For each $\aloc \in \locs_\uparrow$, choose
$\locs^\aloc, \locs^\aloc_\downarrow \,\models\,
 \delta(\aloc, \aletter, \uparrow)$.
\item[(5)]
Increment
$c(\bigcup_{\aloc \in \locs_\uparrow} \locs^\aloc \,\cup\,
   \bigcup_{\aloc \in \locs_\uparrow} \locs^\aloc_\downarrow \,\cup\,
   \bigcup_{d(\locsbis, \locsbis_\downarrow) > 0}
     \locsbis_\downarrow)$.
\item[(6)]
While $d(\locsbis, \locsbis_\downarrow) > 0$
for some $\locsbis, \locsbis_\downarrow \subseteq \locs$,
decrement $d(\locsbis, \locsbis_\downarrow)$,
and increment $c(\locsbis)$ if $\locsbis$ is nonempty.
\item[(7)]
Either choose nonempty $\locs_\uparrow$ with $c(\locs_\uparrow) > 0$
and decrement $c(\locs_\uparrow)$,
or $\locs_\uparrow := \emptyset$.
\item[(8)]
Repeat from (2).
\end{itemize}

By (*), we have that the procedure has an infinite execution such that
the letters chosen in step~(2) are $\aletter_0, \aletter_1, \ldots$
iff $\aregaut$ accepts a data $\omega$-word $\adataword$ such that
$\aletter_i = \adataword(i)$ for each $i \in \mathbb{N}$.
Therefore, in the remainder of the proof, we show that
the procedure is implementable by a safety IPCANT $\acaut_\aregaut$
which is computable in polynomial space and
whose basis size is linear in $\length{\locs}$.

For $\locsbis, \locsbis_\downarrow \subseteq \locs$, let
\[\overline{\locsbis} =
  \{\overline{*}\} \,\cup\,
  \{\overline{\aloc} \,:\, \aloc \in \locsbis\}
  \hspace{2em}
  \ooverline{\locsbis, \locsbis_\downarrow} =
  \{\ooverline{*}\} \,\cup\,
  \{\ooverline{\aloc} \,:\, \aloc \in \locsbis\} \,\cup\,
  \{\ooverline{\aloc}_\downarrow \,:\, \aloc \in \locsbis_\downarrow\}\]
We define the basis of $\acaut_\aregaut$ as
$\overline{\locs} \cup \ooverline{\locs, \locs}$
(where we assume disjointness),
and the counters of $\acaut_\aregaut$ are:
$\overline{\locsbis}$ for each $\locsbis \subseteq \locs$,
and $\ooverline{\locsbis, \locsbis_\downarrow}$
for each $\locsbis, \locsbis_\downarrow \subseteq \locs$.
The set of counters of $\acaut_\aregaut$ is thus essentially
$\mathcal{P}(\locs) \cup \mathcal{P}(\locs)^2$.
Note that, compared to the procedure above,
$\acaut_\aregaut$ has the extra counter $\overline{\emptyset}$.

The states of $\acaut_\aregaut$ are used for control,
and for storing the letters from $\aalphabet$ as well as
the elements and subsets of $\locs$.
Step~(0) is implemented by default,
and steps (1), (2), (4) and (8) are straightforward.

Step~(3) can be performed by
a single simultaneous nondeterministic transfer,
with the mapping
\[\begin{array}{c}
  \{\overline{\locsbis} \,\mapsto\,
    \{\ooverline{\textstyle{\bigcup}_{\aloc \in \locsbis} \locs^\aloc,
        \textstyle{\bigcup}_{\aloc \in \locsbis} \locs^\aloc_\downarrow} \,:\,
      \fforall{\aloc \in \locsbis}
              {\locs^\aloc, \locs^\aloc_\downarrow \,\models\,
               \delta(\aloc, \aletter, \nuparrow)}\}, \\
    \ooverline{\locsbis, \locsbis_\downarrow} \mapsto
    \{\ooverline{\locsbis, \locsbis_\downarrow}\}
    \::\: \locsbis, \locsbis_\downarrow \subseteq \locs\}
\end{array}\]
whose distributivity is a key component of the paper.
To show that it holds, suppose
$\overline{\locsbis} \subseteq
 \bigcup_{i = 1}^k \overline{\locsbis^i}$, and
$\locs^{i, \aloc}, \locs^{i, \aloc}_\downarrow \,\models\,
 \delta(\aloc, \aletter, \nuparrow)$
for each $i \in \{1, \ldots, k\}$ and $\aloc \in \locsbis^i$.
Given $\aloc \in \locsbis$,
let $i_\aloc$ be such that $\aloc \in \locsbis^{i_\aloc}$.
We then have, as required:
\[\ooverline{\textstyle{\bigcup}_{\aloc \in \locsbis} \locs^{i_\aloc, \aloc},
    \textstyle{\bigcup}_{\aloc \in \locsbis}
      \locs^{i_\aloc, \aloc}_\downarrow} \subseteq
  \textstyle{\bigcup}_{i = 1}^k
    \ooverline{\textstyle{\bigcup}_{\aloc \in \locsbis^i} \locs^{i, \aloc},
      \textstyle{\bigcup}_{\aloc \in \locsbis^i} \locs^{i, \aloc}_\downarrow}\]

The following is an implementation of step~(5):
\begin{itemize}
\item
Set $\locsbis' := \bigcup_{\aloc \in \locs_\uparrow} \locs^\aloc \,\cup\,
                  \bigcup_{\aloc \in \locs_\uparrow} \locs^\aloc_\downarrow$.
\item
For each $\aloc \in \locs$,
either perform the transfer that
verifies that each $\ooverline{\locsbis, \locsbis_\downarrow}$
with $\aloc \in \locsbis_\downarrow$ is zero (cf.\ Example~\ref{ex:IPCANT}),
or choose $\locsbis, \locsbis_\downarrow \subseteq \locs$
with $\aloc \in \locsbis_\downarrow$,
decrement $\ooverline{\locsbis, \locsbis_\downarrow}$,
increment $\ooverline{\locsbis, \locsbis_\downarrow}$
and set $\locsbis' := \locsbis' \cup \{\aloc\}$.
\item
Increment $\overline{\locsbis'}$.
\end{itemize}

For step~(6), we use the transfer with the mapping
\[\{\overline{\locsbis} \mapsto \{\overline{\locsbis}\},
    \ooverline{\locsbis, \locsbis_\downarrow} \mapsto \{\overline{\locsbis}\}
    \,:\, \locsbis, \locsbis_\downarrow \subseteq \locs\}\]
which is distributive since
$\ooverline{\locsbis, \locsbis_\downarrow} \subseteq
 \bigcup_{i = 1}^k \ooverline{\locsbis^i, \locsbis_\downarrow^i}$
implies
$\overline{\locsbis} \subseteq
 \bigcup_{i = 1}^k \overline{\locsbis^i}$.

Finally, in step~(7), if $\locs_\uparrow := \emptyset$ is performed,
then either $\overline{\emptyset}$ is decremented or not.

Observe therefore that the auxiliary counter $\overline{\emptyset}$ is
transferred to $\ooverline{\emptyset, \emptyset}$ in step~(3),
that $\ooverline{\emptyset, \emptyset}$ is
transferred to $\overline{\emptyset}$ in step~(6),
and that those two counters do not affect anything else.

In step~(2), $\acaut_\aregaut$ performs an $\aletter$ transition,
and all other transitions are $\emptyword$.
However, the only cycle in the transition graph of $\acaut_\aregaut$
corresponds to the loop (2)--(8), so the requirement of
no cycles of $\emptyword$ transitions is met.

The only nontrivial aspect of computing $\acaut_\aregaut$
in space polynomial in the size of $\aregaut$ is
the implementation of step~(3).
However, for each $\locsbis \subseteq \locs$, the set
\[\{\ooverline{\textstyle{\bigcup}_{\aloc \in \locsbis} \locs^\aloc,
      \textstyle{\bigcup}_{\aloc \in \locsbis} \locs^\aloc_\downarrow} \,:\,
    \fforall{\aloc \in \locsbis}
            {\locs^\aloc, \locs^\aloc_\downarrow \,\models\,
             \delta(\aloc, \aletter, \nuparrow)}\}\]
can be output by iterating over all mappings
$\aloc \mapsto \tuple{\locs^\aloc, \locs^\aloc_\downarrow}$
from $\locsbis$ to $\mathcal{P}(\locs)^2$.
Each such mapping can be stored in space $2 \length{\locs}^2$,
and deciding
$\locs^\aloc, \locs^\aloc_\downarrow \,\models\,
 \delta(\aloc, \aletter, \nuparrow)$
amounts to evaluating a propositional formula.

It remains to show that incrementing errors cannot cause
$\acaut_\aregaut$ to accept an $\omega$-word $\aletter_0 \aletter_1 \ldots$
which it does not accept without incrementing errors.
Informally, that is the case because
incrementing errors in runs of $\acaut_\aregaut$ amount to
introductions of spurious threads into corresponding runs of $\aregaut$,
which can only make acceptance harder.

Suppose $\acaut_\aregaut$ accepts
an $\omega$-word $\aletter_0 \aletter_1 \ldots$,
i.e.\ the implementation of the procedure above has an infinite execution $E$
which may contain incrementing errors and
which chooses in step~(2) the letters $\aletter_0, \aletter_1, \ldots$.
Below, we define an error-free infinite execution $E_\surd$ such that
the letters chosen in step~(2) are also $\aletter_0, \aletter_1, \ldots$,
and we show by induction that the following are satisfied before each step:
\begin{itemize}
\item[(i)]
$\acval_\surd \sqsubseteq \acval$ (cf.\ Lemma~\ref{l:sqsse}),
where $\acval$ and $\acval_\surd$ are the current counter valuations
in $E$ and $E_\surd$ (respectively);
\item[(ii)]
$\locs_\uparrow^\surd \subseteq \locs_\uparrow$,
if $\locs_\uparrow$ and $\locs_\uparrow^\surd$ are defined, where they are
the current values of the variable in $E$ and $E_\surd$ (respectively).
\end{itemize}
Initially, we have that $\acval$ and $\acval_\surd$ equal $\mathbf{0}$,
and that $\locs_\uparrow$ and $\locs_\uparrow^\surd$ are undefined,
so the inductive base is trivial.
We also have that
$\acval_\surd \sqsubseteq \acval$ and $\acval \leq \acval'$
imply $\acval_\surd \sqsubseteq \acval'$,
i.e.\ the $\sqsubseteq$ relation is preserved by
incrementing errors in the second argument.
\begin{description}
\item[Steps (1) and (2)]
$E_\surd$ performs the same transitions as $E$.
\item[Steps (3) and (6)]
$E_\surd$ performs the transfers as in Lemma~\ref{l:sqsse}.
\item[Step~(4)]
For each $\aloc \in \locs_\uparrow^\surd \subseteq \locs_\uparrow$,
the same $\locs^\aloc$ and $\locs^\aloc_\downarrow$
are chosen in $E_\surd$ as in $E$.
\item[Step~(5)]
For each $\aloc \in \locs$, if there exist
$\locsbis^\surd$ and $\locsbis_\downarrow^\surd \ni \aloc$ such that
$\acval_\surd(\ooverline{\locsbis^\surd, \locsbis_\downarrow^\surd}) > 0$,
we have by (i) that there exist
$\locsbis$ and $\locsbis_\downarrow \ni \aloc$ such that
$\acval(\ooverline{\locsbis, \locsbis_\downarrow}) > 0$.
It follows that $\locsbis'_\surd \subseteq \locsbis'$,
where $\locsbis'$ is the value of the variable
after the implementation of step~(5) is executed in $E$,
and $\locsbis'_\surd$ is the value
after the unique error-free execution in $E_\surd$.
Hence, (i) is preserved.
\item[Step~(7)]
Let $\iota: \widetilde{\acval_\surd} \rightarrow \widetilde{\acval}$
be an injection (cf.\ the proof of Lemma~\ref{l:sqsse}),
and $\locs_\uparrow$ be the value chosen in $E$.
If $\overline{\locs_\uparrow}$ is decremented and
$\iota\tuple{\overline{\locs_\uparrow^\surd}, i} =
 \tuple{\overline{\locs_\uparrow}, j}$
for some $\locs_\uparrow^\surd$, $i$ and $j$
(in particular, $\locs_\uparrow^\surd \subseteq \locs_\uparrow$),
then choose such $\locs_\uparrow^\surd$ in $E_\surd$ and
decrement $\overline{\locs_\uparrow^\surd}$.
Otherwise, choose $\emptyset$ in $E_\surd$ without decrementing.
\end{description}
That completes the definition of $E_\surd$ and the proof.
\end{proof}

\begin{theorem}
\label{th:IPCANT}
Nonemptiness of safety IPCANT
is decidable in space exponential in basis size
and polylogarithmic in alphabet size and number of locations.
\end{theorem}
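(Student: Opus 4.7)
The plan is to decide nonemptiness by reducing it to existence of a finite lazy run of bounded length. I aim to establish a length bound $N$, doubly exponential in the basis size $n = |\aset|$ and polynomial in $|\locs|$ and $|\aalphabet|$, and then simulate a lazy run of length up to $N$ using nondeterministic exponential space; by Savitch's theorem, this yields the claimed \textsc{ExpSpace} algorithm.

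First, $\acaut$ is nonempty iff it has an infinite lazy run from $\tuple{\aloc_I, \mathbf{0}}$: one direction is immediate, and the other follows by iterating Proposition~\ref{pr:lazy} from the initial configuration. Given the current configuration, the set of possible lazy successors is finite, since each discrete choice of letter, instruction, and (for transfers) the distribution coefficients $K^c_{c'}$ ranges over a finite set (the $K^c_{c'}$ are bounded by the current counter values). Hence the tree of lazy runs is finitely branching, and by K\"onig's lemma an infinite lazy run exists iff lazy runs of every finite length exist.

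The heart of the proof is an inductive counting argument establishing the bound $N$. In a lazy run from $\mathbf{0}$, only increments add to the total counter mass (transfers preserve it and decrements do not increase it), so after $k$ steps each counter value is at most $k$. The pre-order $\sqsubseteq$ of Lemma~\ref{l:sqsse}, together with the distributivity of transfers, yields the key monotonicity property: if two configurations along a lazy run share a state and are related by $\sqsubseteq$, then the segment between them can be iterated from the later configuration via repeated applications of Lemma~\ref{l:sqsse}, yielding an infinite lazy extension. The inductive argument traverses the poset structure of $C \subseteq \mathcal{P}(\aset) \setminus \{\emptyset\}$ (ordered by $\subseteq$): proceeding level by level within the counter hierarchy, and using distributivity at each step to redistribute bounded mass upward, one derives a recurrence whose solution satisfies $\log N = 2^{O(n)} + O(\log|\locs| + \log|\aalphabet|)$.

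Given $N$, the decision procedure nondeterministically simulates a lazy run from $\tuple{\aloc_I, \mathbf{0}}$ for $N$ steps, accepting iff the simulation survives. Storage requirements are: the current configuration in $|C| \cdot \log N = 2^{O(n)}$ bits (since $|C| \leq 2^n$ and counter values are at most $N$), the step counter in $\log N = 2^{O(n)}$ bits, and the current state and letter in $O(\log|\locs| + \log|\aalphabet|)$ bits. Savitch's theorem determinizes this into deterministic space of the same exponential order. The principal obstacle is the inductive counting bound: without distributivity, generic well-quasi-order machinery gives Ackermannian bounds (as for lossy counter machines), so a careful exploitation of distributivity at each step of the induction on the counter poset is what yields the doubly exponential bound.
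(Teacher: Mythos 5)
Your top-level architecture matches the paper's: restrict to lazy runs, prove that a lazy run of some doubly exponential length $m$ from the initial configuration guarantees an infinite one, then guess such a run nondeterministically and apply Savitch. The space accounting is also essentially the paper's. However, the core of the proof --- the combinatorial argument that actually establishes the length bound --- is missing, and the one concrete mechanism you propose for it is wrong. You claim that if two configurations along a lazy run share a state and the earlier is $\sqsubseteq$-below the later, then the intervening segment can be iterated ``via repeated applications of Lemma~\ref{l:sqsse}.'' Lemma~\ref{l:sqsse} goes in the opposite direction: it is a \emph{downward} simulation (the $\sqsubseteq$-smaller valuation can mimic a transfer while staying $\sqsubseteq$-smaller). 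Iterating a segment from the $\sqsubseteq$-\emph{larger} configuration requires an upward simulation, which fails: a distributive $\amap$ with $\amap(c) = \emptyset$ acts as a zero test (cf.\ $\tuple{\mathtt{ifz}^\cap, \asetbis}$ in Example~\ref{ex:IPCANT}), and the extra or upward-redistributed mass in the later configuration can make such a transfer unfirable, while laziness only forgives errors on decrements. Concretely, take basis $\{\aelem\}$, counter $\{\aelem\}$, and transitions $\aloc_0 \rightarrow \aloc_1$ performing $\tuple{\mathtt{ifz}^\cap, \{\aelem\}}$ and $\aloc_1 \rightarrow \aloc_0$ performing $\tuple{\mathtt{inc}, \{\aelem\}}$: the three-configuration lazy run visits $\aloc_0$ twice with $\mathbf{0} \sqsubseteq \acval$, yet no infinite run exists.

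The paper's argument is correspondingly more delicate. It does not look for $\sqsubseteq$-related repetitions but for two \emph{equal} configurations, and it finds them by an induction that adaptively chooses an enumeration $\aelem_1, \ldots, \aelem_{\length{\aset}}$ of the \emph{basis} (not a level-by-level traversal of the subset lattice), at each stage fixing the exact values of all counters whose least new basis element is $\aelem_{i+1}$ on a still-large set of positions. Distributivity enters precisely in the step you have no substitute for (claims (III) and (IV) in the paper's proof): assuming for contradiction that at some position every fresh basis element owns a large counter, one builds paths $H_c$ through the transfer graph tracking where each counter's mass goes, uses distributivity to cover an arbitrary counter $c$ by the paths of the singleton-witness counters $c_{\aelem'}$, and shows the segment can then be iterated forever with errors confined to the counters on those paths --- contradicting the assumed absence of an infinite run. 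Without this step there is no way to bound the fresh counters and close the recurrence, so your bound $N$ is asserted rather than proved. (As a minor additional point, $\sqsubseteq$ is not needed in this theorem at all --- the paper uses it only inside the proof of Theorem~\ref{th:RA2IPCANT}.)
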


\begin{proof}
Suppose $\acaut = \tuple{\aalphabet, \locs, \aloc_I, \aset, C, \delta}$
is a safety IPCANT.
By Proposition~\ref{pr:lazy}, $\acaut$ is nonempty iff
it has an infinite sequence of lazy transitions from the initial configuration.

We define positive integers $\alpha_i$ and $U_i$
for $i = 0, \ldots, \length{\aset}$ as follows:
\[
\alpha_0 = \length{\locs}
\hspace{2em}
U_0 = 1
\hspace{2em}
\alpha_{i + 1} = 2 (\length{\aset} - i) \alpha_i U_i^{\length{C}}
\hspace{2em}
U_{i + 1} = 3 \alpha_i U_i^{\length{C}}
\]
Let $m = 2 \alpha_{\length{\aset}} U_{\length{\aset}}^{\length{C}}$.
We shall show:
\begin{describe}{(I)}
\item[(I)]
If $\acaut$ has a sequence of $m - 1$
lazy transitions from the initial configuration,
then it has an infinite sequence.
\end{describe}
Therefore, nonemptiness of $\acaut$ can be decided
nondeterministically by guessing a sequence of $m - 1$
lazy transitions from the initial configuration.
In every such sequence, each transition increases the sum of all counters
by at most $1$, so no counter can exceed $m - 1$.
Since $m < 2^{2^{2 \length{\aset}^2 + \length{\aset}} \log(3 \length{\locs})}$
and $\length{C} < 2^{\length{\aset}}$, a single configuration
can be stored in space $2^{O(\length{\aset}^2)} O(\log \length{\locs})$.
To guess a sequence of length $m - 1$, it suffices to store
at most two configurations, the number of transitions guessed so far,
and a fixed number of variables bounded by
$\length{\acaut} =
 2^{2^{O(\length{\aset})}} O(\length{\aalphabet} \cdot \length{\locs})$
for indexing the transition relation of $\acaut$.
Hence, nonemptiness of $\acaut$ is decidable nondeterministically in space
$2^{O(\length{\aset}^2)} O(\log(\length{\aalphabet} \cdot \length{\locs}))$,
so by Savitch's Theorem, there is a deterministic algorithm of space complexity
$2^{O(\length{\aset}^2)} O(\log(\length{\aalphabet} \cdot \length{\locs})^2)$.

To show (I), suppose $\acaut$ has a sequence of lazy transitions
$S =
\tuple{\aloc_1, \acval_1}
\stackrel{\aword_1, l_1}{\longrightarrow} \cdots
\stackrel{\aword_{m - 1}, l_{m - 1}}{\longrightarrow}
\tuple{\aloc_m, \acval_m}$
from the initial configuration, but no infinite sequence.
By careful repeated uses of the pigeonhole principle and
the distributivity of simultaneous nondeterministic transfers,
we shall obtain the contradiction that
$S$ must contain two equal configurations.
To start with, some state must occur among $\aloc_1, \ldots, \aloc_m$
at least $m / \length{\locs}$ times,
so let $\aloc \in \locs$ and $J_0 \subseteq \{1, \ldots, m\}$
be such that $\length{J_0} = m / \alpha_0 U_0^{\length{C}}$
and $\aloc_j = \aloc$ for each $j \in J_0$.
We claim:
\begin{describe}{(II)}
\item[(II)]
There exist an enumeration
$\aelem_1, \ldots, \aelem_{\length{\aset}}$ of $\aset$,
and for $i = 1, \ldots, \length{\aset}$,
mappings $u_i: C_i \rightarrow \{0, \ldots, U_i - 1\}$ where
$C_i = \{c \in C \::\: \aelem_i \in c \,\wedge\,
           \aelem_1, \ldots, \aelem_{i - 1} \notin c\}$,
and subsets $J_i$ of $\{1, \ldots, m\}$
of size $m / \alpha_i U_i^{\length{C}}$,
such that the following property holds for each $0 \leq i \leq \length{\aset}$:
for all $j \in J_i$, we have that $\aloc_j = \aloc$
and that for all $1 \leq i' \leq i$ and $c \in C_{i'}$,
$\acval_j(c) = u_{i'}(c)$.
\end{describe}

We establish (II) by proving the property inductively on $i$
and simultaneously picking $\aelem_i$, $u_i$ and $J_i$.
The case $i = 0$ is trivial.
Assume that $0 \leq i < \length{\aset}$
and that $\aelem_{i'}$, $u_{i'}$ and $J_{i'}$ for $i' = 1, \ldots, i$
have been picked so that the property holds for $i$.
Let us call a subsequence of $S$ an \emph{$i$-subsequence} iff
there exist consecutive $j, j' \in J_i$
(i.e.\ where there is no $j'' \in J_i$ with $j < j'' < j'$)
such that the subsequence
begins at $\tuple{\aloc_j, \acval_j}$ and
ends at $\tuple{\aloc_{j'}, \acval_{j'}}$.
Let $J'_i \subseteq J_i$ consist of the beginning positions of the
$\length{J_i} / 2 = m / 2 \alpha_i U_i^{\length{C}}$ shortest $i$-subsequences.
The length of the longest of those $i$-subsequences
must be at most $2 \alpha_i U_i^{\length{C}}$,
since otherwise there would be at least $\length{J_i} / 2$
$i$-subsequences of length more than $m / (\length{J_i} / 2)$.
Let
$S^\dag =
\tuple{\aloc_j, \acval_j}
\stackrel{\aword_j, l_j}{\longrightarrow} \cdots
\stackrel{\aword_{j' - 1}, l_{j' - 1}}{\longrightarrow}
\tuple{\aloc_{j'}, \acval_{j'}}$
be an $i$-subsequence with $j \in J'_i$.
We have $j' - j \leq 2 \alpha_i U_i^{\length{C}}$,
$\aloc_j = \aloc_{j'} = \aloc$,
and for all $1 \leq i' \leq i$ and $c \in C_{i'}$,
$\acval_j(c) = \acval_{j'}(c) = u_{i'}(c)$.
Recalling that $u_{i'}: C_{i'} \rightarrow \{0, \ldots, U_{i'} - 1\}$, we obtain
$\sum_{i' = 1}^i \sum_{c \in C_{i'}} \acval_{j'}(c) \leq
 \sum_{i' = 1}^i \length{C_{i'}} U_{i'}$.

To make progress, we prove:
\begin{describe}{(III)}
\item[(III)]
There exists $\aelem'_j \neq \aelem_1, \ldots, \aelem_i$ such that,
for each $c$ with $\aelem'_j \in c$ and $\aelem_1, \ldots, \aelem_i \notin c$,
$\acval_j(c) \leq
 2 \alpha_i U_i^{\length{C}} +
 \sum_{i' = 1}^i \length{C_{i'}} U_{i'}$.
\end{describe}
Suppose the contrary:
for each $\aelem' \neq \aelem_1, \ldots, \aelem_i$,
there exists $c_{\aelem'}$ such that
$\aelem' \in c_{\aelem'}$,
$\aelem_1, \ldots, \aelem_i \notin c_{\aelem'}$, and
$\acval_j(c_{\aelem'}) >
 2 \alpha_i U_i^{\length{C}} +
 \sum_{i' = 1}^i \length{C_{i'}} U_{i'}$.
Let $H$ be a directed acyclic graph on $\{j, \ldots, j'\} \times C$,
defined by letting the successors of $\tuple{k, d}$ be:
\begin{itemize}
\item
$\emptyset$, if $k = j'$;
\item
$\{\tuple{k + 1, d'} \,:\, d' \in \amap(d)\}$,
if $l_k$ is of the form $\tuple{\mathtt{transf}, \amap}$;
\item
$\{\tuple{k + 1, d}\}$, otherwise.
\end{itemize}
Now, for $c \in C$ and $k \in \{j, \ldots, j'\}$,
let $H(c, k)$ be the set of all $d$ such that
$\tuple{k, d}$ is reachable in $H$ from $\tuple{j, c}$.
We have
$\sum_{d \in H(c, k)} \acval_k(d) \geq
 \acval_j(c) - (k - j)$
by induction on $k$.
In particular, for each $\aelem' \neq \aelem_1, \ldots, \aelem_i$, we have
$\sum_{d \in H(c_{\aelem'}, j')} \acval_{j'}(d) \geq
 \acval_j(c_{\aelem'}) - (j' - j) >
 \sum_{i' = 1}^i \length{C_{i'}} U_{i'} \geq
 \sum_{i' = 1}^i \sum_{c \in C_{i'}} \acval_{j'}(c)$,
so there is some $d_{\aelem'} \in H(c_{\aelem'}, j')$ such that
$\aelem_1, \ldots, \aelem_i \notin d_{\aelem'}$.
Let $H_{\aelem'}$ be a path in $H$
from $\tuple{j, c_{\aelem'}}$ to $\tuple{j', d_{\aelem'}}$.
For $k \in \{j, \ldots, j'\}$, let $H_{\aelem'}(k)$ denote
the counter at position $k$ in $H_{\aelem'}$.

Consider any $c$ with $\aelem_1, \ldots, \aelem_i \notin c$.
Observe that $c \subseteq \bigcup \{c_{\aelem'} \,:\, \aelem' \in c\}$.
Let $H_c$ be a path in $H$ from $\tuple{j, c}$, obtained as follows.
Assuming that $k \in \{j, \ldots, j' - 1\}$ and
$H_c(k) \subseteq \bigcup \{H_{\aelem'}(k) \,:\, \aelem' \in c\}$:
\begin{itemize}
\item
if $l_k$ is of the form $\tuple{\mathtt{transf}, \amap}$,
by distributivity of $\amap$ and the definition of $H$, we can pick
$H_c(k + 1) \subseteq \bigcup \{H_{\aelem'}(k + 1) \,:\, \aelem' \in c\}$;
\item
otherwise, we have $H_{\aelem'}(k + 1) = H_{\aelem'}(k)$
for each $\aelem' \in c$, and the only possibility is $H_c(k + 1) = H_c(k)$.
\end{itemize}
Since $H_c(j') \subseteq \bigcup \{H_{\aelem'}(j') \,:\, \aelem' \in c\}$,
we conclude that $\aelem_1, \ldots, \aelem_i \notin H_c(j')$.

Using the paths $H_c$, we now show that,
from the final configuration of $S^\dag$,
the instructions in $S^\dag$ can be performed repeatedly
to obtain an infinite sequence of lazy transitions,
which is a contradiction, so (III) holds.
More precisely, since $\acval_j(d) = \acval_{j'}(d)$
for all $1 \leq i' \leq i$ and $d \in C_{i'}$,
and $H_c(j) = c$ for all $c$,
by (IV) below from $\acval_{j'}$ for $k = j, \ldots, j' - 1$,
there exist lazy transitions
$\tuple{\aloc_j, \acval_{j'}}
 \stackrel{\aword_j, l_j}{\longrightarrow} \cdots
 \stackrel{\aword_{j' - 1}, l_{j' - 1}}{\longrightarrow}
 \tuple{\aloc_{j'}, \acval'_{j'}}$
such that $\acval'_{j'}(d) \leq \acval_{j'}(d)$
for all $d \notin \{H_c(j') \,:\, \aelem_1, \ldots, \aelem_i \notin c\}$.
But
$\{H_c(j') \,:\, \aelem_1, \ldots, \aelem_i \notin c\} \subseteq
 \{c \,:\, \aelem_1, \ldots, \aelem_i \notin c\}$,
so (IV) can be applied from $\acval'_{j'}$ for $k = j, \ldots, j' - 1$, etc.
\begin{describe}{(IV)}
\item[(IV)]
Suppose $k \in \{j, \ldots, j' - 1\}$,
and $\acval'_k$ is a counter valuation
such that $\acval'_k(d) \leq \acval_k(d)$
for all $d \notin \{H_c(k) \,:\, \aelem_1, \ldots, \aelem_i \notin c\}$.
There exists a lazy transition
$\tuple{\aloc_k, \acval'_k}
 \stackrel{\aword_k, l_k}{\longrightarrow}
 \tuple{\aloc_{k + 1}, \acval'_{k + 1}}$
such that $\acval'_{k + 1}(d) \leq \acval_{k + 1}(d)$
for all $d \notin \{H_c(k + 1) \,:\, \aelem_1, \ldots, \aelem_i \notin c\}$.
\end{describe}
To show (IV), we distinguish between two cases:
\begin{itemize}
\item
If $l_k$ is of the form $\tuple{\mathtt{transf}, \amap}$,
let $K^d_{d'} \geq 0$ for each $d \in C$ and $d' \in \amap(d)$ satisfy
\[\begin{array}{c}
  \mathrm{for\ each}\ d \in C,\
  \acval_k(d) = \textstyle{\sum}_{d' \in \amap(d)} K^d_{d'} \\
  \mathrm{for\ each}\ d' \in C,\
  \acval_{k + 1}(d') = \textstyle{\sum}_{\amap(d) \ni d'} K^d_{d'}
  \end{array}\]
For $d \in C$ such that $\acval'_k(d) \leq \acval_k(d)$,
pick any ${K'}^d_{d'} \geq 0$ such that
$\acval'_k(d) = \sum_{d' \in \amap(d)} {K'}^d_{d'}$
and ${K'}^d_{d'} \leq K^d_{d'}$ for each $d' \in \amap(d)$.
For $d \in C$ such that $\acval'_k(d) > \acval_k(d)$,
we have $d = H_c(k)$ for some $c$ with $\aelem_1, \ldots, \aelem_i \notin c$,
so we can set ${K'}^d_{d'} = K^d_{d'}$
for all $d' \in \amap(d) \setminus \{H_c(k + 1)\}$,
and ${K'}^d_{H_c(k + 1)} = K^d_{H_c(k + 1)} + \acval'_k(d) - \acval_k(d)$.
Now, for each $d' \in C$, let
$\acval'_{k + 1}(d') = \sum_{\amap(d) \ni d'} K^d_{d'}$, so that
$\tuple{\aloc_k, \acval'_k}
 \stackrel{\aword_k, l_k}{\longrightarrow}
 \tuple{\aloc_{k + 1}, \acval'_{k + 1}}$ lazily.
Since ${K'}^d_{d'} > K^d_{d'}$ implies
$d' \in \{H_c(k + 1) \,:\, \aelem_1, \ldots, \aelem_i \notin c\}$,
we have $\acval'_{k + 1}(d') \leq \acval_{k + 1}(d')$
for all $d' \notin \{H_c(k + 1) \,:\, \aelem_1, \ldots, \aelem_i \notin c\}$.
\item
Otherwise, $\acval'_{k + 1}$ is uniquely determined by the lazy transition
$\tuple{\aloc_k, \acval'_k}
 \stackrel{\aword_k, l_k}{\longrightarrow}
 \tuple{\aloc_{k + 1}, \acval'_{k + 1}}$,
and has the required property as $H_c(k + 1) = H_c(k)$ for all $c$.
\end{itemize}

For each $j \in J'_i$,
let $\aelem'_j \neq \aelem_1, \ldots, \aelem_i$ be as in (III).
For each $c$ with $\aelem'_j \in c$ and $\aelem_1, \ldots, \aelem_i \notin c$,
we have $\acval_j(c) < U_{i + 1}$.
Let $\aelem_{i + 1}$ be such that
there exists $J''_i \subseteq J'_i$ of size
$\length{J'_i} / (\length{\aset} - i) = m / \alpha_{i + 1}$
with $\aelem_{i + 1} = \aelem'_j$ for all $j \in J''_i$.
Thus, for all $j \in J''_i$ and $c \in C_{i + 1}$,
we have $\acval_j(c) < U_{i + 1}$.
Then let $u_{i + 1}: C_{i + 1} \rightarrow \{0, \ldots, U_{i + 1} - 1\}$
be such that there exists $J_{i + 1} \subseteq J''_i$
of size $m / \alpha_{i + 1} U_{i + 1}^{\length{C}}$
with $\acval_j(c) = u_{i + 1}(c)$
for all $j \in J_{i + 1}$ and $c \in C_{i + 1}$.
That completes the inductive proof of (II).

Since $m = 2 \alpha_{\length{\aset}} U_{\length{\aset}}^{\length{C}}$,
we have from (II) that $S$ contains two equal configurations, so $\acaut$ has
an infinite sequence of lazy transitions from the initial configuration.
That is a contradiction, so (I) is shown.
\end{proof}

By Theorems \ref{th:RA2IPCANT}, \ref{th:IPCANT} and \ref{th:LTL2RA},
we obtain:

\begin{corollary}
Safety 1ARA$_1$ nonemptiness and
safety LTL$^\downarrow_1(\nextt, \release)$ satisfiability
are in \textsc{ExpSpace}.
\end{corollary}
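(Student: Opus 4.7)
The plan is to combine the three preceding results into a direct composition, taking care that the complexity parameters line up.

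For safety 1ARA$_1$ nonemptiness, I start with a given automaton $\aregaut$ and apply Theorem~\ref{th:RA2IPCANT} to obtain a safety IPCANT $\acaut_\aregaut$ with $\mathrm{L}(\acaut_\aregaut) = \{\mathrm{str}(\adataword) \,:\, \adataword \in \mathrm{L}(\aregaut)\}$. The crucial points to record are that the basis size of $\acaut_\aregaut$ is linear in the number of states of $\aregaut$, that its alphabet is the alphabet of $\aregaut$, and that the overall size of $\acaut_\aregaut$ is at most exponential in the size of $\aregaut$ but the number of states is polynomially bounded. Since nonemptiness is preserved, $\aregaut$ is nonempty iff $\acaut_\aregaut$ is. I then apply Theorem~\ref{th:IPCANT}: nonemptiness of $\acaut_\aregaut$ is decidable in space exponential in basis size and polylogarithmic in alphabet size and number of states. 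Substituting the bounds from the first step, the basis contributes $2^{O(|\locs|^2)}$ and the alphabet/state contribution is polylogarithmic, so the total space is $2^{O(|\aregaut|^2)}$, i.e.\ \textsc{ExpSpace} in the size of $\aregaut$.

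For safety LTL$^\downarrow_1(\nextt, \release)$ satisfiability, given a sentence $\aformula$ I apply Theorem~\ref{th:LTL2RA} to compute, in logarithmic space, a safety 1ARA$_1$ $\aregaut_\aformula$ with $\mathrm{L}(\aformula) = \mathrm{L}(\aregaut_\aformula)$ and size polynomial in $|\aformula|$. Satisfiability of $\aformula$ then reduces to nonemptiness of $\aregaut_\aformula$, which by the previous paragraph is in \textsc{ExpSpace} in $|\aregaut_\aformula|$, hence in \textsc{ExpSpace} in $|\aformula|$.

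There is no real obstacle here: the corollary is a bookkeeping exercise that chains the three theorems. The only point requiring a moment's care is to note that the polynomial blow-up from $\aregaut$ to the states and alphabet of $\acaut_\aregaut$, together with the linear bound on the basis, keeps the exponential-in-basis complexity from Theorem~\ref{th:IPCANT} within \textsc{ExpSpace} when re-expressed in terms of the original input; if instead the basis had grown polynomially in $|\aregaut|$, we would only get \textsc{2ExpSpace}. Thus the key feature of Theorem~\ref{th:RA2IPCANT} that is genuinely used here is the linear (rather than just polynomial) bound on the basis.
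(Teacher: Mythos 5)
Your composition is exactly the paper's intended argument: the corollary is stated there with no proof beyond citing Theorems~\ref{th:RA2IPCANT}, \ref{th:IPCANT} and \ref{th:LTL2RA}, and chaining them as you do is all that is required. Two of your side remarks are inaccurate, though neither invalidates the conclusion. First, the number of states of $\acaut_\aregaut$ is \emph{not} polynomially bounded: the construction in Theorem~\ref{th:RA2IPCANT} stores letters, elements \emph{and subsets} of $\locs$ in its control states, so $\acaut_\aregaut$ has exponentially many states in $\length{\aregaut}$ (the theorem only guarantees polynomial-space computability, hence at most exponential size, with the basis linear). This is precisely why the polylogarithmic dependence on the number of states in Theorem~\ref{th:IPCANT} matters: a polylogarithm of an exponential is a polynomial in $\length{\aregaut}$, so the state contribution stays harmless. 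Your argument survives because you do invoke that polylogarithmic dependence, but the justification you attach to it is off. Second, the claim that a merely polynomial basis would drop the result to \textsc{2ExpSpace} is wrong: with basis size $\length{\aregaut}^k$, Theorem~\ref{th:IPCANT} yields space $2^{O(\length{\aregaut}^{2k})}$, which is still single-exponential and hence in \textsc{ExpSpace} (indeed the paper itself emphasises that a polynomial bound on the basis is what is needed, linearity being a bonus); \textsc{2ExpSpace} would arise only if the basis itself grew exponentially in $\length{\aregaut}$.
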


\section{Lower Bound}

\begin{theorem}
Safety 1ARA$_1$ nonemptiness and
safety LTL$^\downarrow_1(\nextt, \release)$ satisfiability
are \textsc{ExpSpace}-hard.
\end{theorem}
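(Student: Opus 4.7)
The plan is to reduce from nontermination of a deterministic Turing machine $M$ using $2^n$ tape cells on input $w$; this problem is \textsc{ExpSpace}-hard because termination of such machines is, and \textsc{ExpSpace} is closed under complement. By Theorem~\ref{th:LTL2RA}, the logarithmic-space translation from safety LTL$^\downarrow_1(\nextt, \release)$ to safety 1ARA$_1$ lifts any hardness at the LTL level to the automaton level, so it suffices to construct in polynomial time a safety sentence $\aformula_{M, w}$ whose satisfying data $\omega$-words are precisely the faithful encodings of an infinite non-halting run of $M$ on $w$.

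A satisfying data word is intended to represent $\# C_0 \# C_1 \# C_2 \cdots$, an infinite sequence of TM configurations separated by $\#$. Each $C_i$ is a block of $2^n$ cells in address order, and each cell is a group of $n + 1$ letter positions: $n$ address-bit positions, with each letter tagged both by its intra-cell index in $\{1, \ldots, n\}$ and by its bit value in $\{0, 1\}$, followed by one tape-symbol position (carrying state information at the head). Data values are used so that within a single configuration the symbol-position data are pairwise distinct, while across consecutive configurations the cells at the same address share their symbol-position datum. The sentence $\aformula_{M, w}$ is the conjunction of safety clauses, each of polynomial size and using only $\always$ (that is, $\bot \release \cdot$), $\release$, $\nextt$, $\downarrow$, $\uparrow$ and $\nuparrow$: a fixed-prefix clause encoding $C_0$; a local shape clause forcing the cell and block format with addresses incrementing in binary between consecutive cells and $\#$ appearing exactly when the address rolls over; within-block uniqueness of the cell datum via $\# \release \nuparrow$; cross-block propagation of the cell datum via a freeze-and-$\release$ that rules out $\uparrow$ until the next block's same-address cell is reached, at which point $\uparrow$ is required; a transition clause applying $M$'s transition function at each matched pair of cells using the locally available neighbour letters; and a non-halting clause forbidding halting states.

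The main obstacle is pinpointing, from a cell in $C_i$, the corresponding same-address cell in $C_{i + 1}$ with only one register and only safety connectives, and within a polynomial-size formula. The key observation is that the shape clause, being entirely local, already pins the letter projection of every satisfying data word to the unique infinite sequence of well-formed blocks of length $(n + 1) \cdot 2^n$, with address bits strictly cycling through $0 \cdots 0, \ldots, 1 \cdots 1$ between consecutive $\#$s. Within this structurally forced geometry, the ``next block's same-address cell'' is propositionally identifiable as the first cell past the next (and only the next) $\#$ whose address-bit letters agree bit-by-bit with the source cell's, and a single $\release$ expression (parametrised by $n$ bit-comparisons, each a local propositional check) both forbids spurious earlier data matches and verifies equality precisely at the target. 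Correctness of the reduction is then direct: the determinism of $M$ makes the canonical encoding of $M$'s computation the only candidate satisfying word, and such a word exists iff $M$'s computation is infinite, which is iff $M$ does not halt on $w$.
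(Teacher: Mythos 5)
Your overall architecture (encode configurations as blocks of $2^n$ cells carrying $n$-bit binary addresses, force the letter projection by local shape clauses, use data to tie each cell to its successor in the next configuration, then check the transition function locally) matches the paper's. The gap is in the one step you yourself flag as the main obstacle: locating, from a cell of $C_i$, the same-address cell of $C_{i+1}$. You propose to identify it \emph{propositionally}, as ``the first cell past the next $\#$ whose address-bit letters agree bit-by-bit with the source cell's,'' checked by ``a single $\release$ expression parametrised by $n$ bit-comparisons, each a local propositional check.'' This does not work: a propositional check evaluated at a candidate target cell, exponentially far from the source, has no access to the source cell's $n$ address bits. An LTL formula can carry one source bit in its syntactic structure (by case-splitting on $b_d$ at the source), but ``the first cell past the next $\#$ agreeing on bit $d$'' is a different cell for different $d$, so per-bit checks cannot be recombined into an identification of the single cell agreeing on \emph{all} $n$ bits; a monolithic identification would need $2^n$ disjuncts (one per address), and jumping to the target by a fixed offset would need exponentially many nested $\nextt$'s. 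Since your cross-block data-propagation clause (``rule out $\uparrow$ until the same-address cell is reached, then require $\uparrow$'') presupposes this identification, the construction as described is not of polynomial size.

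The paper resolves exactly this point by inverting the roles of data and addresses: it never locates the same-address cell propositionally. Instead it (a) puts all $n+1$ letters of a cell encoding into one class, (b) imposes, for each bit index $d$ and value $b$, the purely negative clause ``no occurrence of $b_d$ is followed by an occurrence of $(1-b)_d$ in the same class'' ($2n$ small safety clauses of the form $\always(\overline{b_d} \vee {\downarrow}\nextt\always(\overline{(1-b)_d} \vee {\nuparrow}))$), which forces any two same-class cells to have equal addresses, and (c) requires that every cell's class reoccurs at some cell of the next configuration encoding --- a bounded eventuality, expressible with $\release$ precisely because the shape clauses guarantee the next configuration encoding terminates. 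Together (b) and (c) make ``next occurrence of a tape-symbol letter in the same class'' land exactly on the same-address cell of the next configuration, and the transition clauses then navigate via that. If you replace your propositional identification with this data-driven one, the rest of your reduction goes through.
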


\begin{proof}
By Theorem~\ref{th:LTL2RA}, it suffices to show \textsc{ExpSpace}-hardness of
satisfiability for safety LTL$^\downarrow_1(\nextt, \release)$.
We shall reduce from the halting problem for
Turing machines with exponentially long tapes.
More precisely, a Turing machine $\amachine$ is a tuple
$\tuple{\aalphabet, \aletter_B, \locs, \aloc_I, \delta}$
such that:
\begin{itemize}
\item
$\aalphabet$ is a finite alphabet, and
$\aletter_B \in \aalphabet$ denotes the blank symbol;
\item
$\locs$ is a finite set of states, and
$\aloc_I \in \locs$ is the initial state;
\item
$\delta:
 \locs \times \aalphabet \rightarrow
 \locs \times \aalphabet \times \{-1, 1\}$
is the transition function.
\end{itemize}

If the size of $\amachine$ is $n$,
we consider its computation on a tape of length $2^n$.
More formally, a configuration of $\amachine$ is
of the form $\tuple{\aloc, i, \aword}$ where
$\aloc \in \locs$ is the machine state,
$0 \leq i < 2^n$ is the head position, and
$\aword \in \aalphabet^{2^n}$ is the tape contents.
The initial configuration is $\tuple{\aloc_I, 0, \aletter_B^{2^n}}$.
A configuration $\tuple{\aloc, i, \aword}$ has a transition iff
$0 \leq i + o < 2^n$ where
$\tuple{\aloc', \aletter, o} = \delta(\aloc, \aword(i))$.
In that case, we write
$\tuple{\aloc, i, \aword} \rightarrow
 \tuple{\aloc', i + o, \aword[i \mapsto \aletter]}$.
Since $\amachine$ can halt by
requesting to move the head off an edge of the tape,
it does not need to have a special halting state.

The following problem is \textsc{ExpSpace}-complete:
given $\amachine = \tuple{\aalphabet, \aletter_B, \locs, \aloc_I, \delta}$
of size $n$, is the computation from the initial configuration
with tape length $2^n$ infinite?
(To reduce in polynomial time from the same problem with tape length $2^{n^k}$,
extend the machine by unreachable states until it is of size $n^k$.)
We shall show that a sentence $\aformula_\amachine$
of safety LTL$^\downarrow_1(\nextt, \release)$
is computable in space logarithmic in $n$,
such that the answer to the decision problem is `yes'
iff $\aformula_\amachine$ is satisfiable.

Let
$\widehat{\aalphabet} =
 \{\widehat{\aletter} \,:\, \aletter \in \aalphabet\}$.
The alphabet of $\aformula_\amachine$ is
$\widetilde{\aalphabet} =
 \locs \,\uplus\,
 \{0_d, 1_d \,:\, d \in \{1, \ldots, n\}\} \,\uplus\,
 \aalphabet \,\uplus\, \widehat{\aalphabet}$.
To encode a tape cell, we write its position in binary followed by its contents.
A configuration $\tuple{\aloc, i, \aword}$ is then encoded by the word below,
where $\widehat{\aalphabet}$ is used to mark the contents at head position.
Let $\aword(i, i) = \widehat{\aword(i)}$,
and $\aword(j, i) = \aword(j)$ for $j \neq i$.
\[\aloc \,
  0_1 \,\cdots\, 0_{n - 1} \, 0_n \, \aword(0, i) \,
  0_1 \,\cdots\, 0_{n - 1} \, 1_n \, \aword(1, i) \,\cdots\,
  1_1 \,\cdots\, 1_{n - 1} \, 1_n \, \aword(2^n - 1, i)\]

The computation of $\amachine$
from the initial configuration with tape length $2^n$ is infinite iff
there exists a data $\omega$-word $\adataword$ over $\widetilde{\aalphabet}$
such that:
\begin{itemize}
\item[(i)]
$\mathrm{str}(\adataword)$ is a sequence of
encodings of configurations of $\amachine$;
\item[(ii)]
$\mathrm{str}(\adataword)$ begins with the encoding of
the initial configuration $\tuple{\aloc_I, 0, \aletter_B^{2^n}}$;
\item[(iii)]
for every two consecutive encodings in $\mathrm{str}(\adataword)$
of configurations $\tuple{\aloc, i, \aword}$ and $\tuple{\aloc', i', \aword'}$,
we have $\tuple{\aloc, i, \aword} \rightarrow \tuple{\aloc', i', \aword'}$.
\end{itemize}
Hence, it suffices to construct $\aformula_\amachine$ such that
$\adataword$ satisfies $\aformula_\amachine$ iff
(i)--(iii) hold and:
\begin{itemize}
\item[(iv)]
for every encoding in $\adataword$ of a tape cell,
all the letters $b_d$ and $\aword(j, i)$ are in the same class;
\item[(v)]
for every two encodings in $\adataword$
of tape cells with positions $j$ and $j'$
(occuring in one or two configuration encodings),
their classes are the same iff $j = j'$.
\end{itemize}
The purpose of (iv) and (v) is to enable navigation through $\adataword$
for checking (i)--(iii) in $\aformula_\amachine$,
whose size will be only polynomial in $n$.

For (i), we can split it into the following constraints,
each of which is straightforward to express:
\begin{itemize}
\item
the first letter is a state of $\amachine$;
\item
every state of $\amachine$ is succeeded by
$0_1 \,\cdots\, 0_{n - 1} \, 0_n$;
\item
every $b_n$ is succeeded by
an element of $\aalphabet \,\uplus\, \widehat{\aalphabet}$;
\item
for every $b_d$ not succeeded by $1_{d + 1} \,\cdots\, 1_n$,
$b_d$ occurs $n + 1$ positions later
(the next position has the same binary digit $d$);
\item
for every $0_d$ succeeded by $1_{d + 1} \,\cdots\, 1_n$,
$1_d \, 0_{d + 1} \,\cdots\, 0_n$ occurs $n + 1$ positions later
(the next position has the opposite binary digit $d$);
\item
$1_1 \,\cdots\, 1_{n - 1} \, 1_n$ followed by
an element of $\aalphabet \,\uplus\, \widehat{\aalphabet}$
are succeeded by a state of $\amachine$;
\item
between every two consecutive occurrences of states of $\amachine$,
there is exactly one occurrence of an element of $\widehat{\aalphabet}$.
\end{itemize}

Properties (ii) and (iv) are also straightforward.
Before (iii), let us consider (v),
which is equivalent to the following conjunction:
\begin{itemize}
\item[(v.1)]
for every two encodings of tape cells,
if their classes are the same then their positions are the same;
\item[(v.2)]
for every encoding of a tape cell,
some tape cell in the next configuration encoding has the same class.
\end{itemize}
The more involved is (v.1).
It amounts to requiring that,
for all $d \in \{1, \ldots, n\}$ and $b \in \{0, 1\}$,
it is not the case that
there is an occurrence of $b_d$ and
a subsequent occurrence of $(1 - b)_d$ with the same datum:
\[\textstyle{\bigwedge}_{d = 1}^n \textstyle{\bigwedge}_{b = 0}^1
  \always (\overline{b_d} \vee {\downarrow} \nextt
    \always (\overline{(1 - b)_d} \vee {\nuparrow}))\]
where $\overline{\aletter}$ abbreviates
$\bigvee \{\aletter' \::\: \aletter' \,\in\,
             \widetilde{\aalphabet} \setminus \{\aletter\}\}$.

Property (iii) is now equivalent to asserting that
the following hold for all $\aloc \in \locs$ and $\aletter \in \aalphabet$,
where $\tuple{\aloc', \aletter', o} = \delta(\aloc, \aletter)$:
\begin{itemize}
\item[(iii.1)]
whenever $\aloc$ occurs with $\widehat{\aletter}$
in the same configuration encoding,
the next occurrence of a state of $\amachine$ is $\aloc'$;
\item[(iii.2)]
for every occurrence of some $\aletterbis \in \aalphabet$
in a configuration encoding which contains $\aloc$ and $\widehat{\aletter}$,
the next occurrence in the same class of
an element of $\aalphabet \,\uplus\, \widehat{\aalphabet}$
is an occurrence of $\aletterbis$ or $\widehat{\aletterbis}$;
\item[(iii.3)]
for every occurrence of $\widehat{\aletter}$
in a configuration encoding containing $\aloc$,
the next occurrence in the same class of
an element of $\aalphabet \,\uplus\, \widehat{\aalphabet}$
is an occurrence of $\aletter'$,
and $n$ positions earlier (if $o = -1$) or later (if $o = 1$)
an element of $\widehat{\aalphabet}$ occurs.
\end{itemize}
The most involved is (iii.3),
and the two cases of $o = -1$ and $o = 1$ are similar.
Letting $\widehat{\aalphabet}$ and $\overline{\widehat{\aalphabet}}$ abbreviate
$\bigvee \{\aletterbis \,:\, \aletterbis \in \widehat{\aalphabet}\}$ and
$\bigvee \{\aletterbis \::\: \aletterbis \,\in\,
             \widetilde{\aalphabet} \setminus \widehat{\aalphabet}\}$
(respectively), (iii.3) with $o = -1$ is expressed by:
\[\always \bigg(\aloc \Rightarrow
    \neg \bigg(\overline{\widehat{\aalphabet}} \until
      \Big(\widehat{\aletter} \wedge {\downarrow} \nextt
        \Big(\overline{\widehat{\aalphabet}} \until
          \big(\widehat{\aalphabet} \wedge \nextt^n
            \neg (\aletter' \wedge {\uparrow})\big)\Big)\Big)\bigg)\bigg)\]
To obtain a sentence of safety LTL$^\downarrow_1(\nextt, \release)$
in the strict sense, we convert to negation normal form:
\[\always \bigg(\overline{\aloc} \vee
    \bigg(\widehat{\aalphabet} \release
      \Big(\overline{\widehat{\aletter}} \vee {\downarrow} \nextt
        \Big(\widehat{\aalphabet} \release
          \big(\overline{\widehat{\aalphabet}} \vee \nextt^n
            (\aletter' \wedge {\uparrow})\big)\Big)\Big)\bigg)\bigg)\]

To output $\widetilde{\aalphabet}$ and $\aformula_\amachine$
given $\amachine$ as above,
a fixed number of counters which are bounded by $n$ suffice.
\end{proof}

\section{Inclusion and Refinement}

Using well-quasi-orderings,
the proofs of Theorems \ref{th:RA2IPCANT} and \ref{th:IPCANT},
and that satisfiability over finite data words
for LTL$^\downarrow_1(\nextt, \sometimes)$
is not primitive recursive \cite[Theorem~5.2]{Demri&Lazic09},
we obtain the result below.

We remark that, in a similar manner, one can show that the following
``model-checking'' problems are decidable and not primitive recursive:
whether the language of a B\"uchi one-way nondeterministic register automaton
(with any number of registers) is included in the language of
a safety 1ARA$_1$ or a safety LTL$^\downarrow_1(\nextt, \release)$ sentence.

\begin{theorem}
The following problems are decidable and not primitive recursive:
\begin{itemize}
\item
inclusion for safety 1ARA$_1$;
\item
refinement for safety LTL$^\downarrow_1(\nextt, \release)$.
\end{itemize}
\end{theorem}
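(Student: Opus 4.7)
\emph{Decidability of inclusion.} Given safety 1ARA$_1$ $\aregaut_1, \aregaut_2$ over a common alphabet, Proposition~\ref{pr:safety.RA} makes $\mathrm{L}(\aregaut_2)$ safety, so $\mathrm{L}(\aregaut_1) \not\subseteq \mathrm{L}(\aregaut_2)$ iff there exists $\adataword \in \mathrm{L}(\aregaut_1)$ together with an index $i$ such that every partial run of $\aregaut_2$ on the $i$-prefix of $\adataword$ fails to extend to an infinite run. The plan is to recast this as a reachability question on a combined system obtained by applying the abstraction $H$ and the IPCANT construction in the proof of Theorem~\ref{th:RA2IPCANT} to a disjoint union of the two automata, so that the shared data structure is respected. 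The $\aregaut_1$-side behaves as $\acaut_{\aregaut_1}$ and must continue forever; the $\aregaut_2$-side, indexed by subsets of $\aregaut_2$-states, must, after some nondeterministically guessed point, admit no infinite continuation at all, a universal-termination sub-test on its counter fragment.

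Each subtest is decidable by well-quasi-ordering arguments building on Theorem~\ref{th:IPCANT}. Dickson's lemma gives a wqo on counter valuations; monotonicity of the instruction set under incrementing errors (Proposition~\ref{pr:lazy}) and of the $\sqsubseteq$ relation under distributive simultaneous transfers (Lemma~\ref{l:sqsse}) lets the inductive counting of Theorem~\ref{th:IPCANT} be recycled both for the existential non-termination question on the $\aregaut_1$-side and for its dual on the $\aregaut_2$-side. The main obstacle I anticipate is this dual: unlike the search for a single long sequence in Theorem~\ref{th:IPCANT}, the $\aregaut_2$-component requires certifying that \emph{no} sequence is infinite, and verifying that the simultaneous nondeterministic transfers remain monotone in the right direction is exactly where the distributivity hypothesis on $\amap$ from Section~2 must be re-examined carefully when the universal quantifier over trajectories is in play.

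\emph{Refinement.} Theorem~\ref{th:LTL2RA} translates safety LTL$^\downarrow_1(\nextt, \release)$ to safety 1ARA$_1$ in logarithmic space, so refinement reduces to inclusion.

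\emph{Non-primitive recursive lower bound.} Reduce from satisfiability of LTL$^\downarrow_1(\nextt, \sometimes)$ over finite data words, not primitive recursive by \cite[Theorem~5.2]{Demri&Lazic09}. Given $\aformulabis$ over $\aalphabet$, pick a fresh $\# \notin \aalphabet$, let $\aformula_\#$ be a short safety LTL$^\downarrow_1(\nextt, \release)$ sentence forcing the string projection to lie in $\aalphabet^+ \# \#^\omega$, and let $\aformulater$ be obtained from $\neg \aformulabis$ by pushing negations to the atoms (so that $\sometimes$ becomes $\always$, $\until$ becomes $\release$, and $\downarrow, \uparrow, \nuparrow$ use self-duality) and then relativising every temporal operator to stay strictly before the first $\#$; thus $\aformulater$ is a safety sentence of LTL$^\downarrow_1(\nextt, \release)$, and a data $\omega$-word satisfying $\aformula_\# \wedge \neg \aformulater$ exists iff a finite $\aalphabet$-data-word model of $\aformulabis$ exists. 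Setting $\aregaut_1 = \aregaut_{\aformula_\#}$ and $\aregaut_2 = \aregaut_{\aformula_\# \wedge \aformulater}$ via Theorem~\ref{th:LTL2RA}, inclusion $\mathrm{L}(\aregaut_1) \subseteq \mathrm{L}(\aregaut_2)$ fails iff $\aformulabis$ is satisfiable on finite data words; the same reduction, read one level up on the logic side through Theorem~\ref{th:LTL2RA}, supplies the non-primitive-recursive lower bound for refinement.
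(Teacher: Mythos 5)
There are two genuine gaps, one in each half of your argument.

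\emph{Decidability.} You reduce non-inclusion to: find an infinite behaviour of the $\aregaut_1$-side such that the $\aregaut_2$-side, reading the same abstraction, admits \emph{no} infinite continuation. You correctly flag this universal-termination sub-test as the main obstacle, but it is not a detail to be re-examined --- it is the crux, and the wqo/counting machinery of Theorem~\ref{th:IPCANT} does not apply to it: that theorem decides the \emph{existential} question ``is there an infinite lazy run'', whereas your sub-test asks that \emph{all} runs of a nondeterministic gainy counter system over a shared, on-the-fly input die, interleaved with an existential search on the $\aregaut_1$-side. The missing idea is to complement $\aregaut_2$ \emph{before} passing to counters, at the level of the alternating automaton, where complementation is just dualization of the transition formulae: the dual automaton $\overline{\aregaut_2}$ with co-safety acceptance (some $\confs_i$ becomes empty on the $\locs_2$-part) recognises the complement of $\mathrm{L}(\aregaut_2)$ by the L\"oding--Thomas duality for weak alternating automata. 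Non-inclusion then becomes nonemptiness of the intersection automaton $\aregaut_\cap$, a purely existential condition on a single IPCANT (reach a configuration whose $\locs_2$-counters are all zero, then run forever), which is decided by a forward saturation under Dickson's wqo followed by the length bound $m$ of Theorem~\ref{th:IPCANT}. Without this dualization step your plan does not go through.

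\emph{Lower bound.} Your $\aformulater$ is not a safety sentence. The negation normal form of $\aformulabis \in$ LTL$^\downarrow_1(\nextt, \sometimes)$ contains $\always$ as well as $\sometimes$, so negating first turns those $\always$'s into $\sometimes$'s, and a $\sometimes$ relativised to ``strictly before the first $\#$'' is a bounded eventuality, i.e.\ an $\until$, which is co-safety and not expressible in the safety fragment (its language is not closed under limits of prefixes). Likewise your $\aformula_\#$, which forces the string projection into $\aalphabet^+ \# \#^\omega$, contains the eventuality ``eventually $\#$'' and is itself not safety, so Theorem~\ref{th:LTL2RA} does not apply to $\aregaut_1$ either. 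The repair is to reverse the order of operations: translate $\aformulabis$ itself into a \emph{purely co-safety} ($\until$-only) formula over $\aalphabet \uplus \{\times\}$ --- the point being that the relativised $\always$ can be written as $(\cdot) \until \times$, so both $\sometimes$ and $\always$ become untils --- conjoin $\top \until \times$, and dualize the whole formula \emph{once} at the end. This yields a single safety sentence that is valid iff $\aformulabis$ is unsatisfiable over finite data words, reducing the problem to (the complement of) validity, which is the special case of refinement with $\top$ on the left; no safety $\aregaut_1$ encoding an eventuality is needed.
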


\begin{proof}
By Theorem~\ref{th:LTL2RA}, it suffices to establish
that inclusion for safety 1ARA$_1$ is decidable and
that refinement for safety LTL$^\downarrow_1(\nextt, \release)$
is not primitive recursive.

For the former, suppose
$\aregaut_1 = \tuple{\aalphabet, \locs_1, \aloc_I^1, \delta_1}$ and
$\aregaut_2 = \tuple{\aalphabet, \locs_2, \aloc_I^2, \delta_2}$
are safety 1ARA$_1$, where we need to determine whether
$\mathrm{L}(\aregaut_1) \subseteq \mathrm{L}(\aregaut_2)$.

Let
$\overline{\aregaut_2} =
 \tuple{\aalphabet, \locs_2, \aloc_I^2, \overline{\delta_2}}$
be the dual automaton to $\aregaut_2$,
so that each formula $\overline{\delta_2}(\alocbis, \aletter, ?)$
is the dual to $\delta_2(\alocbis, \aletter, ?)$,
i.e.\ obtained by replacing every $\top$ with $\bot$,
every $\wedge$ with $\vee$, and vice versa.
Let $\mathrm{L}(\overline{\aregaut_2})$ denote the language of
$\overline{\aregaut_2}$ with respect to \emph{co-safety} acceptance:
a data $\omega$-word $\adataword$ over $\aalphabet$
is in $\mathrm{L}(\overline{\aregaut_2})$
iff $\overline{\aregaut_2}$ has a finite run
$\confs_0 \stackrel{\adataword, 0}{\longrightarrow}
 \confs_1 \stackrel{\adataword, 1}{\longrightarrow}
 \cdots \emptyset$
where $\confs_0 = \{\tuple{\aloc_I^2, [0]_\sim}\}$.
Considering $\aregaut_2$ (resp., $\overline{\aregaut_2}$) as
a weak alternating automaton whose every state is of even (resp., odd) parity,
we have by \cite[Theorem~1]{Loding&Thomas00} that
$\mathrm{L}(\overline{\aregaut_2})$ is the complement of
$\mathrm{L}(\aregaut_2)$.

Now, let $\aregaut_\cap$ be the automaton for
the intersection of $\aregaut_1$ and $\overline{\aregaut_2}$,
obtained by adding a new initial state.
More precisely, assuming that $\locs_1$ and $\locs_2$ are disjoint
and do not contain $\aloc_I$, let
$\aregaut_\cap =
 \tuple{\aalphabet, \{\aloc_I\} \cup \locs_1 \cup \locs_2,
        \aloc_I, \delta_\cap}$,
where
\[\delta_\cap =
  \{\tuple{\aloc_I, \aletter, ?} \mapsto
    \delta_1(\aloc_I^1, \aletter, ?) \wedge
    \overline{\delta_2}(\aloc_I^2, \aletter, ?) \,:\,
    \aletter \in \aalphabet, ? \in \{\uparrow, \nuparrow\}\} \:\cup\:
  \delta_1 \:\cup\:
  \overline{\delta_2}\]
The acceptance condition of $\aregaut_\cap$ is inherited from
$\aregaut_1$ and $\overline{\aregaut_2}$:
a data $\omega$-word $\adataword$ over $\aalphabet$
is in $\mathrm{L}(\aregaut_\cap)$
iff $\aregaut_\cap$ has an infinite run
$\confs_0 \stackrel{\adataword, 0}{\longrightarrow}
 \confs_1 \stackrel{\adataword, 1}{\longrightarrow}
 \cdots$
where $\confs_0 = \{\tuple{\aloc_I, [0]_\sim}\}$
and there exists $i$ such that $\confs_i$ contains only states in $\locs_1$.
We then have that
$\mathrm{L}(\aregaut_\cap) =
 \mathrm{L}(\aregaut_1) \cap \mathrm{L}(\overline{\aregaut_2})$,
so $\mathrm{L}(\aregaut_\cap)$ is empty iff
$\mathrm{L}(\aregaut_1) \subseteq \mathrm{L}(\aregaut_2)$.

Let $\acaut_\cap$ be the IPCANT computed from $\aregaut_\cap$
as in the proof of Theorem~\ref{th:RA2IPCANT},
except that the following step is added between steps (6) and (7),
where $\aloc_\emptyset^2$ is a new state
and implementation is similar to that of step~(5):
\begin{itemize}
\item[(6$\frac{1}{2}$)]
If $c(\locsbis) = 0$ for all $\locsbis$ which intersect $\locs_2$,
then pass through $\aloc_\emptyset^2$.
\end{itemize}
We thus have that $\mathrm{L}(\aregaut_\cap)$ is nonempty iff
$\acaut_\cap$ has an infinite run
$\tuple{\aloc_0, \acval_0} \stackrel{\aword_0, l_0}{\longrightarrow}
 \tuple{\aloc_1, \acval_1} \stackrel{\aword_1, l_1}{\longrightarrow}
 \cdots$
where $\tuple{\aloc_0, \acval_0}$ is the initial configuration
and there exists $i$ such that $\aloc_i = \aloc_\emptyset^2$.

We define $\preceq$ to be the following quasi-ordering
on configurations of $\acaut_\cap$:
$\tuple{\aloc, \acval} \preceq \tuple{\aloc', \acval'}$ iff
$\aloc = \aloc'$ and $\acval \leq \acval'$.
By Dickson's Lemma \cite{Dickson13}, $\preceq$ is a \emph{well-quasi-ordering}:
for every infinite sequence $s_0, s_1, \ldots$,
there exist $i < j$ such that $s_i \preceq s_j$.
Now, consider the following procedure:
\begin{itemize}
\item[(i)]
Let $S$ consist of the initial configuration of $\acaut_\cap$.
\item[(ii)]
Let $S'$ be the set of all successors of configurations in $S$
by lazy transitions.
\item[(iii)]
If for all $s' \in S'$ there exists $s \in S$ with $s \preceq s'$, then stop.
Otherwise, set $S$ to $S \cup S'$, and repeat from (ii).
\end{itemize}
Since $\preceq$ is a well-quasi-ordering, the procedure terminates.
Let $S_{\mathrm{last}}$ denote the value of $S$ at the termination.
It is a finite set, and by Proposition~\ref{pr:lazy}, its upward closure
${\Uparrow} S_{\mathrm{last}} =
 \{s' \,:\, \eexists{s \in S_{\mathrm{last}}}{s \preceq s'}\}$
is the set of all configurations which
$\acaut_\cap$ can reach from the initial configuration.

To conclude decidability of inclusion for safety 1ARA$_1$,
it remains to show that we can decide whether ${\Uparrow} S_{\mathrm{last}}$
contains a configuration whose state is $\aloc_\emptyset^2$
and from which $\acaut_\cap$ has an infinite run.
But that is the case iff $S_{\mathrm{last}}$ contains such a configuration,
and for any configuration $\tuple{\aloc, \acval}$,
we have by the proof of Theorem~\ref{th:IPCANT} that
$\acaut_\cap$ has an infinite run from $\tuple{\aloc, \acval}$
iff it has a sequence of $m - 1$ lazy transitions from $\tuple{\aloc, \acval}$,
where $m$ is as computed in that proof.

We now turn to showing that already validity
for safety LTL$^\downarrow_1(\nextt, \release)$
is not primitive recursive.
We reduce (in logarithmic space) from
satisfiability over finite data words
for LTL$^\downarrow_1(\nextt, \sometimes)$,
which is not primitive recursive by \cite[Theorem~5.2]{Demri&Lazic09}.
In negation normal form, the latter logic differs from
safety LTL$^\downarrow_1(\nextt, \release)$ by having
temporal operators $\dnext$, $\sometimes$ and $\always$ instead of $\release$.
Over finite data words, $\nextt$ and its dual $\dnext$ are distinct:
at any final word position and for any $\aformula$,
$\nextt \aformula$ is false whereas $\dnext \aformula$ is true.

Consider the following translation
from formulae of LTL$^\downarrow_1(\nextt, \sometimes)$
in negation normal form with alphabet $\aalphabet$
to formulae of co-safety LTL$^\downarrow_1(\nextt, \release)$
with alphabet $\aalphabet \uplus \{\times\}$.
Only cases where the construct is modified are shown.
\[\begin{array}{rcl@{\hspace{2em}}rcl}
t(\nextt \aformula) & = &
\nextt (t(\aformula) \wedge \bigvee_{\aletter \in \aalphabet} \aletter)
&
t(\sometimes \aformula) & = &
(\bigvee_{\aletter \in \aalphabet} \aletter) \until
(t(\aformula) \wedge \bigvee_{\aletter \in \aalphabet} \aletter)
\\
t(\dnext \aformula) & = &
\nextt (t(\aformula) \vee {\times})
&
t(\always \aformula) & = &
(t(\aformula) \wedge \bigvee_{\aletter \in \aalphabet} \aletter) \until
{\times}
\end{array}\]
Given a sentence $\aformula$, we have that
a data $\omega$-word $\adataword$ over $\aalphabet \uplus \{\times\}$ satisfies
$\aformulabis_\aformula =
 t(\aformula) \wedge (\bigvee_{\aletter \in \aalphabet} \aletter)
 \wedge (\top \until {\times})$
iff there exists $i > 0$ such that the $i$-prefix of $\adataword$
does not contain $\times$ and satisfies $\aformula$,
and $\adataword(i) = {\times}$.
It remains to observe that the dual of $\aformulabis_\aformula$
is a sentence of safety LTL$^\downarrow_1(\nextt, \release)$,
which is valid over data $\omega$-words iff
$\aformula$ is satisfiable over finite data words.
\end{proof}

\section{Concluding Remarks}

Satisfiability (over timed $\omega$-words)
for the safety fragment of metric temporal logic (MTL)
was shown decidable in \cite{Ouaknine&Worrell06b},
and nonelementary in \cite{Bouyeretal08} by reducing from
termination of channel machines with emptiness testing and insertion errors.
It would be interesting to investigate whether
ideas in the proof of Theorem~\ref{th:IPCANT} above can be combined with
those in the proof of primitive recursiveness of
termination of channel machines with occurrence testing and insertion errors
\cite{Bouyeretal08} to obtain that
satisfiability for safety MTL is primitive recursive.

Another open question is whether
nonemptiness of safety forward alternating tree automata with $1$ register
\cite{Jurdzinski&Lazic07} is primitive recursive.

\begin{acks}
I am grateful to St\'ephane Demri and James Worrell for helpful discussions.
\end{acks}

\bibliographystyle{acmtrans}
\bibliography{freeze_j}

\begin{received}
Received February 2008;
revised March 2009;
accepted April 2010
\end{received}

\end{document}